\newcommand{\beq}{\begin{equation}}
\newcommand{\eeq}{\end{equation}}
\newcommand{\bqa}{\begin{eqnarray}}
\newcommand{\eqa}{\end{eqnarray}}
\definecolor{green}{rgb}{0.00,0.50,0.00}
\newtheorem{theorem}{Theorem}
\newtheorem{corollary}[theorem]{Corollary}
\newtheorem{definition}[theorem]{Definition}
\newtheorem{example}[theorem]{Example}
\newtheorem{lemma}[theorem]{Lemma}
\newtheorem{proposition}[theorem]{Proposition}
\newtheorem{remark}[theorem]{Remark}
\newenvironment{proof}[1][Proof]{\noindent\textbf{#1.} }{\ \rule{0.5em}{0.5em}}
\begin{document}

\title{Entropy Production and Information Flow for Markov Diffusions with Filtering
}


\author{John E. Gough\thanks{Aberystwyth University, SY23 3BZ, Wales, United Kingdom, \texttt{jug@aber.ac.uk}}         \and
        Nina H. Amini \thanks{Laboratoire des Signaux et Syst\'{e}mes, CNRS, 91192 Gif sur Yvette, France,  
				\texttt{nina.amini@lss.supelec.fr}}}

\maketitle

\begin{abstract}
Filtering theory gives an explicit models for the flow of information and thereby quantifies the rates of change of information supplied to and dissipated from the filter's memory. Here we extend the analysis of Mitter and Newton \cite{MN05} from linear Gaussian models to general nonlinear filters involving Markov diffusions.The rates of entropy production are now generally the average squared-field (co-metric) of various logarithmic probability densities, which may be interpreted as Fisher information associate with Gaussian perturbations (via de Bruijn's identity). We show that the central connection is made through the Mayer-Wolf and Zakai Theorem for the rate of change of the mutual information between the filtered state and the observation history. In particular, we extend this Theorem to cover a Markov diffusion controlled by observations process, which may be interpreted as the filter acting as a Maxwell's D\ae mon applying feedback to the system.
\end{abstract}

\section{Introduction}
The aim of the present paper is to extend the analysis of Mitter and Newton \cite{MN05} on the information flow in linear Kalman-Bucy filters to the general setting of nonlinear filters for Markov diffusion models. This leads us into the geometric setting of stochastic processes and filtering theory. In particular, the diffusion tensor defines a co-metric - introduced into Probability Theory by P.-A.  Meyer and also known as the $\Gamma$-operator, or l'op\'{e}rature carr\'{e} du champ, see \cite{Emery}, or \cite{Ledoux}. The co-metric quantifies the extent to which the generator of a diffusion process differs from a tangent vector field, and to a certain extent describes the irreversibility of the stochastic dynamics. Building on the theory of nonlinear filtering \cite{DavisMarcus}-\cite{Zak69}, we use classical results of Kadota, Zakai and Ziv \cite{KZZ71}, and of Mayer-Wolf and Zakai \cite{MWZ} to compute the rate of change of mutual information shared between an unknown state $X(t)$ and the observations $Y_0^t$, see also \cite{Zakai05}, \cite{WKP13}.

We consider a system whose state, $X(t)\in \mathbb{R}^n$, undergoes a stochastic dynamical evolution with probability density of $X(t)$ denoted as 
\begin{eqnarray}
\rho_t (x) \equiv e^{- \Phi_t (x)}.
\end{eqnarray}
(More generally, we shall refer to $\Phi (x) = - \ln \rho (x)$ as being the \textit{surprise potential} associated with a probability density $\rho $
for a random variable $X$. We understand that $\Phi (x)$ takes the value $+\infty$ whenever $\rho (x) =0$. The average surprise potential is then the Shannon entropy $\mathsf{H} (X) = \mathbb{E} [ \Phi (X)]
\equiv -\int \rho (x) \ln \rho (x) \, dx$.)

The uncertainty in the state may be measured by $\mathsf{H}_t = -\int \rho_t \ln \rho_t$ and should be monotonically increasing with time. In principle, we may make observations, $Y(t)$, which depend on the current state and its past history, see Figure \ref{fig:state_observations}. 
 
\begin{figure}[h]
	\centering
		\includegraphics[width=0.50\textwidth]{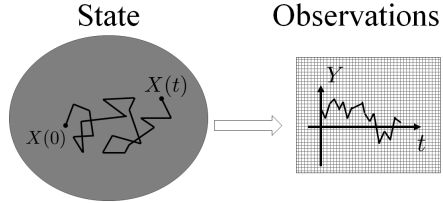}
	\caption{A randomly evolving state, $X(t)$, which is being monitored leading to partial observations $Y(t)$.}
	\label{fig:state_observations}
\end{figure}

The observations may be only partial, and furthermore subject to noise. The observer may be passive, but may also alter the evolution of the system based on the observed data - that is using feedback to act as a controller, or Maxwell's d\ae mon. 
At this stage we may flip from the pejorative view of Shannon entropy as a measure of uncertainty, to the more benign one of being a measure of information. The inverse problem of trying to guess the state $X(t)$ form the observations $Y_0^t =\{ Y(s) : 0 \le s \le t \}$ is a staple of signal processing: while usually not possible to completely determine the state, $X(t)$, one aims instead to construct a filter which computes the optimal estimate based on the observations, see Figure \ref{fig:observations_filter}. 

\begin{figure}
	\centering
		\includegraphics[width=0.50\textwidth]{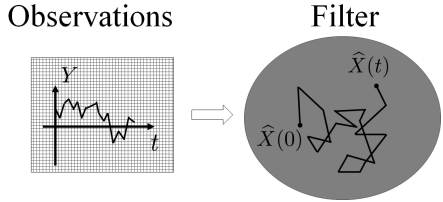}
	\caption{The estimated the state, $\widehat{X}(t)$, based on the observational data, $Y_0^t$.}
	\label{fig:observations_filter}
\end{figure}

In this paper, we will consider the standard problem where the state $X(t)$ undergoes a Markov diffusion and the filter computes an estimate which is optimal in the least squares sense. 
For an arbitrary bounded measurable function $f$, the goal is then to calculate the conditional expectation
\begin{eqnarray}
\pi_t (f)  \triangleq \mathbb{E} \big[ f(X(t)) | \mathcal{Y}_0^t \big] ,
\end{eqnarray}
where $\mathcal{Y}_0^t$ be the $\sigma$-algebra generated by the family $Y_0^t$. $\pi_t (f)$ is then the least-squares estimate of $f(X(t))$ given the observations. The problem is mathematically equivalent to computing the probability density valued process, $\widehat{\rho}_t$ \cite{LipsterShiryaev}, adapted to the filtration $\mathcal{Y} = \{ \mathcal{Y}_0^t :t \geq 0 \}$ such that
\begin{eqnarray}
\pi_t (f) \equiv \int f(x) \widehat{\rho}_t (x , \omega ) \ dx .
\end{eqnarray}
Intuitively, the flow of information to the filter should lead to a reduction in the uncertainty in the state. However, this is quantifiable.

One may further consider the effect of feedback, see \cite{DelvenneSandberg} for discrete time and \cite{AKW14} for continuous time models.

\subsection{Linear Gaussian Models}
In this subsection we recall the results of Mitter and Newton \cite{MN05} concerning the entropy production and information flow of the Kalman-Bucy filter. In this model, the processes $X$ and $Y$ are jointly Gaussian and the filter can be computed exactly.

\subsubsection{Entropy Production}
We consider a linear quadratic Gaussian (LQG) model where we may compute the
various quantities explicitly. Here we take the SDE for the diffusion to be 
\begin{eqnarray}
dX\left( t\right) =AX\left( t\right) dt+BdW\left( t\right)
\label{eq:SDE_L}
\end{eqnarray}
where $A\in \mathbb{R}^{n\times n}$ and $B\in \mathbb{R}^{n\times m}$, with $%
W$ being a canonical $m$-dimensional Wiener process. The initial condition
is $X\left( 0\right) \sim \mathcal{N}\left( \mu _{0},V_{0}\right) $ and one finds that 
$X\left( t\right) \sim \mathcal{N} \left( \mu _t,V\left( t\right) \right) $ where the
(unconditioned) mean value of $X(t)$ is 
\begin{eqnarray}
\mu_t = e^{-At } \, \mathbb{E} [ X(0) ] ,  \label{eq:mu_t}
\end{eqnarray}
and this decays to zero if $A$ is Hurwitz.

The \textit{diffusion matrix} associated with problem is given by $\Sigma \equiv BB^{\top }$, which is constant in both space and time.
The covariance matrix at
time $t$ satisfies the differential equation 
\begin{eqnarray}
\frac{d}{dt}V\left( t\right) =AV\left( t\right) +V\left( t\right) A^{\top
}+\Sigma ,  \label{eq:rate_V}
\end{eqnarray}
with $\Sigma =BB^{\top }$ which we assume to be invertible. Under conditions
that $A$ is Hurwitz, we find that there exists a steady state, $\rho_{\text{ss}}$ which is a $\mathcal{N}\left( \mu _{0},V_{%
\text{ss}}\right) $ distribution where the steady state covariance satisfies 
\begin{eqnarray}
AV_{\text{ss}}+V_{\text{ss}}A^{\top }+\Sigma =0.  \label{eq:V_ss}
\end{eqnarray}
We note that $\frac{d}{dt}V\left( t\right) =A\left( V\left( t\right) -V_{%
\text{ss}}\right) +\left( V\left( t\right) -V_{\text{ss}}\right) A^{\top }$.

The steady state surprise potential is then 
\begin{eqnarray}
\Phi _{\text{ss}}\left( x\right) \equiv \frac{1}{2}x^{\top }V_{\text{ss}%
}^{-1}x+E_{0}
\end{eqnarray}
where the constant is $E_{0}=\ln \left\{ \sqrt{2\pi }^{n}\left| V_{\text{ss}%
}\right| \right\} $.

At this stage, Mitter and Newton \cite{MN05} introduce a quantity 
\begin{eqnarray}
\mathsf{E}_t \triangleq \mathbb{E} [ \Phi_{\text{ss}} (X(t)) ],
\end{eqnarray}
which plays a role similar to the free energy in thermodynamics. We will refer to it as
the \textit{internal surprise} at time $t$, and in the LQG case it is 
\begin{eqnarray}
\mathsf{E}_{t}\equiv \frac{1}{2}\int \rho _{t}\left( x\right) \text{tr}
\left\{ V_{\text{ss}}^{-1}xx^{\top } \right\} +E_{0}=\frac{1}{2}\text{tr}%
\left\{ V_{\text{ss}}^{-1}V\left( t\right) \right\} +E_{0}.
\end{eqnarray}
We see that 
\begin{eqnarray}
\frac{d}{dt}\mathsf{E}_{t} &\equiv &\frac{1}{2}\text{tr}\left\{ V_{\text{ss}%
}^{-1}\left[ A\left( V\left( t\right) -V_{\text{ss}}\right) +\left( V\left(
t\right) -V_{\text{ss}}\right) A^{\top }\right] \right\}  \notag \\
&=&\text{tr}\left\{ V_{\text{ss}}^{-1}A\left( V\left( t\right) -V_{\text{ss}%
}\right) \right\} .
\end{eqnarray}

The Shannon entropy for Gaussian distributions is well known, and here we have
\begin{eqnarray}
\mathsf{H}_{t}=\frac{1}{2}\ln \left| V\left( t\right) \right| +\frac{n}{2}%
\ln \big( 2\pi e \big).
\end{eqnarray}
and we compute its rate of change using the following Lemma.

\begin{lemma}
\label{lem:rate_log_det} Suppose that the matrix valued function $V$
satisfies the linear differential equation $\frac{d}{dt}V\left( t\right)
=AV\left( t\right) +V\left( t\right) A^{\top }+\Sigma $, (\ref{eq:rate_V}),
then 
\begin{eqnarray}
\frac{d}{dt} \ln | V(t) | = \text{tr} \big\{ 2 A + V(t)^{-1} \Sigma \big\}.
\end{eqnarray}
\end{lemma}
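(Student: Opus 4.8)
The plan is to apply Jacobi's formula for the derivative of a log-determinant, namely $\frac{d}{dt}\ln\det V(t) = \text{tr}\!\big( V(t)^{-1}\dot V(t)\big)$, valid wherever $V(t)$ is invertible. I would first note that under the hypotheses ($\Sigma$ invertible, $V_0$ positive definite) the solution $V(t)$ of (\ref{eq:rate_V}) stays positive definite for all $t$, so the logarithm and the inverse are well defined and the formula applies.

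Next I would substitute the right-hand side of (\ref{eq:rate_V}) into Jacobi's formula, giving
\begin{eqnarray}
\frac{d}{dt}\ln|V(t)| = \text{tr}\big\{ V(t)^{-1}\big( A V(t) + V(t) A^\top + \Sigma \big)\big\} = \text{tr}\{ V(t)^{-1} A V(t)\} + \text{tr}\{ V(t)^{-1} V(t) A^\top \} + \text{tr}\{ V(t)^{-1}\Sigma\}.
\end{eqnarray}
Then the cyclic invariance of the trace turns the first term into $\text{tr}\{ A V(t) V(t)^{-1}\} = \text{tr}\{A\}$, and the second term is simply $\text{tr}\{A^\top\} = \text{tr}\{A\}$. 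Collecting these yields $\text{tr}\{2A + V(t)^{-1}\Sigma\}$, which is the claimed identity.

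There is essentially no real obstacle here: the only subtlety worth a sentence is justifying Jacobi's formula and the positivity of $V(t)$ (so the derivative of $\ln\det$ exists and $V(t)^{-1}$ makes sense), after which everything reduces to the cyclic property of the trace. I would therefore keep the proof to a couple of lines, perhaps recalling that Jacobi's formula follows from $\det(V + \epsilon H) = \det V \cdot \det(I + \epsilon V^{-1}H) = \det V\,(1 + \epsilon\,\text{tr}(V^{-1}H) + O(\epsilon^2))$.
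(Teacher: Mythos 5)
Your proposal is correct and follows essentially the same route as the paper: Jacobi's formula $\frac{d}{dt}\ln|V(t)| = \mathrm{tr}\{V(t)^{-1}\dot V(t)\}$, substitution of the Lyapunov equation, cyclic invariance of the trace, and $\mathrm{tr}\{A\}=\mathrm{tr}\{A^\top\}$. The extra remarks on positivity of $V(t)$ and the derivation of Jacobi's formula are fine but add nothing beyond what the paper's proof already assumes.
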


\begin{proof}
We have the well-known identity $\frac{d}{dt} \ln | V(t) | = \text{tr} %
\big\{ V(t)^{-1} \frac{dV(t)}{dt} \big\}$, which in this case implies $\frac{%
d}{dt} \ln | V(t) | = \text{tr} \big\{ A+A^\top +V(t)^{-1} \Sigma \big\}$,
where we use the commutativity under the trace. The result then follows from
noting that $\text{tr} \big\{ A \big\} = \text{tr} \big\{ A^\top \big\}$.
\end{proof}

\bigskip

We therefore have that 
\begin{eqnarray}
\frac{d}{dt}\mathsf{H}_{t}=\text{tr}\left\{ V\left( t\right) ^{-1}A\left(
V\left( t\right) + \frac{1}{2} \Sigma \right) \right\} \equiv \text{tr}%
\left\{ V\left( t\right) ^{-1}A\left( V\left( t\right) -V_{\text{ss}}\right)
\right\} ,
\end{eqnarray}
where, for the last step, we substitute (\ref{eq:V_ss}) and once again we
use the invariance of the trace under commutation and transposition.

Mitter and Newton \cite{MN05} then observe that the quantity analogous to the free energy,
$\mathsf{F}_t \triangleq \mathsf{E}_t - \mathsf{H}_t$ is non-increasing.
The rate of change of the $\mathsf{F}_t$, which we refer to as the \textit{free surprise}, is given explicitly
by
\begin{eqnarray}
\frac{d}{dt}\mathsf{F}_{t} &\equiv &\text{tr}\left\{ \left[ V_{\text{ss}%
}^{-1}-V\left( t\right) ^{-1}\right] A\left( V\left( t\right) -V_{\text{ss}%
}\right) \right\} \\
&=&-\text{tr}\left\{ \left[ V_{\text{ss}}^{-1}-V\left( t\right) ^{-1}\right]
AV_{\text{ss}}\left[ V_{\text{ss}}^{-1}-V\left( t\right) ^{-1}\right]
V\left( t\right) \right\} \\
&=&-\frac{1}{2}\text{tr}\left\{ \left[ V_{\text{ss}}^{-1}-V\left( t\right)
^{-1}\right] \Sigma \left[ V_{\text{ss}}^{-1}-V\left( t\right) ^{-1}\right]
V\left( t\right) \right\}
\end{eqnarray}
from which we see that $\frac{d}{dt}\mathsf{F}_{t}\leq 0$ with equality if and only if the covariance matrix equals the steady state value.

\subsubsection{The Kalman-Bucy Filter}
Mitter and Newton \cite{MN05} extend their analysis of entropy and information flow by considering the Kalman-Bucy filter.
Here the state dynamics will be just the LQG model already presented in (\ref{eq:SDE_L}), but now take into account the information supplied by observations $Y\left( t\right) \in \mathbb{R}^{p}$ which are likewise assumed
to be linear with 
\begin{eqnarray}
dY\left( t\right) =CX\left( t\right) dt+dU\left( t\right) ,
\label{eq:KB_obs}
\end{eqnarray}
with $C\in \mathbb{R}^{p\times n}$ and the observational noise $U$ is taken to be independent of
dynamical noise $W$ for simplicity. 
The initial variables $X(0)$ and $Y(0)$ are both assumed to be independent
of the noises $W$ and $U$, and jointly Gaussian. The pair of processes $(X,Y)$ are then jointly Gaussian.

The conditioned probability density turns out to be again Gaussian and is  given as
\begin{eqnarray}
\widehat{\rho}_t (x, \omega ) = \frac{1}{ \sqrt{2 \pi}^n  | \widehat{V}_t |} e^{- \frac{1}{2} [x- \widehat{X}_t (\omega ) ]^\top
\widehat{V}_t^{-1} [x- \widehat{X}_t (\omega )]},
\end{eqnarray}
where the mean $\widehat{X}_t (\omega )$ and covariance matrix $\widehat{V}(t)$ are explicitly computed as follows:
the covariance matrix is deterministic and satisfies the Riccati
equation 
\begin{eqnarray}
\frac{d}{dt}\widehat{V}_{t}=A\widehat{V}_{t}+\widehat{V}_{t}A^{\top
}+BB^{\top }-\widehat{V}_{t}C^{\top }C\widehat{V}_{t},  \label{eq:KB2}
\end{eqnarray}
with $\widehat{V}_{0}=Cov\left( X\left( 0\right) \right) $;
the mean is $\widehat{X}_{t}=\pi _{t}\left( X\right) $ and satisfies
\begin{eqnarray}
d\widehat{X}_{t}=A\widehat{X}_{t}dt+\widehat{V}_{t}C^{\top }dI\left( t\right),
\label{eq:KB1}
\end{eqnarray}
where $I$ is a stochastic process known as the innovation process 
\begin{eqnarray}
dY\left( t\right)
=CX(t)+dU(t)\equiv C\widehat{X}_{t}\,dt+dI\left( t\right) .
\end{eqnarray}
Note that we then have 
\begin{eqnarray*}
dI\left( t\right) =C\varepsilon _{t}\left( X\right) dt+dU\left( t\right) ,
\end{eqnarray*}
where the state error is $\varepsilon _{t}\left( X\right) =X\left( t\right) -%
\widehat{X}_{t}$. It is not too difficult to show that the state error
satisfies 
\begin{eqnarray*}
d\varepsilon _{t}\left( X\right) =A\varepsilon _{t}\left( X\right) dt+BdV\left(
t\right) -\widehat{V}\left( t\right) C^{\top }dI\left( t\right) .
\end{eqnarray*}

From the point of view of filtering theory, the natural quantity to look at  is the mutual information shared between the state 
and the observations history, $\mathsf{I}\left( X\left( t\right) ;Y_{0}^{t}\right)$. 
Mitter and Newton \cite{MN05} observe the decomposition
\begin{eqnarray}
\frac{d}{dt}\mathsf{I}\left( X\left( t\right) ;Y_{0}^{t}\right) = \dot{%
\mathsf{S}} (t)- \dot{\mathsf{D} } (t) .
\end{eqnarray}
where the \textbf{information supplied to the filter's memory storage} up to
time $t$ is 
\begin{eqnarray}
\mathsf{S} (t) \triangleq \mathsf{I}\left( X_0^t;Y_{0}^{t}\right) +\mathsf{I}%
\left( X(0) ;Y(0) \right)
\end{eqnarray}
and $\mathsf{D} (t) $ is interpreted as the \textbf{information dissipated by the filter's memory storage} up to
time $t$.

For the Kalman-Bucy filter, they show that
\begin{eqnarray}
\frac{d}{dt}\mathsf{S}_{\text{K-B}} (t) &=& \frac{1}{2} \text{tr} \big\{ C \widehat{V} (t) C^\top \big\} ,\nonumber \\
 \frac{d}{dt} \mathsf{D}_{\text{K-B}} (t) &=& \frac{1}{2}   \text{tr} \big\{ \Sigma \big( \widehat{V}(t)^{-1} -V(t)^{-1} \big) \big\}.
\end{eqnarray}
The rate of change of $\mathsf{S}_{\text{K-B}} (t)$ follows from a well known result of signal process due to Duncan \cite{TD70}, 
see Remark \ref{rem:KB_Duncan}. We will re-derive these in Example \ref{example:KB_rates}.

\section{Notation and Background Concepts}
We recall for completeness the fundamental information theoretic and probabilistic concepts needed in this paper.
\subsection{Measures of Entropy}
Suppose that we have a continuous $\mathbb{R}^{n}$-valued random variable $X$
possessing a probability distribution function $\rho _{X}$, then we define the \textbf{surprise} function $\Phi_X$ by
\begin{eqnarray}
\rho_X (x) \equiv e^{- \Phi_X (x)}.
\end{eqnarray}
We shall use the terms entropy and information interchangeably.
\subsubsection{Shannon Entropy}
The Shannon entropy of $X$ is defined to be the average surprise:
\begin{eqnarray}
\mathsf{H}\left( X\right) \triangleq \mathbb{E} [ \Phi_X ] \equiv -\int \rho _{X}\ln \rho _{X}.
\end{eqnarray}
Strictly speaking the entropy is a functional of the probability distribution, $\rho _{X}$, rather than $X$ itself, but we will use this
standard short hand convention throughout. If we have a family $(X_1 , \cdots , X_n )$ of random variables on the same probability space, then we will of course understand $H(X_1 , \cdots , X_n )$ to be their entropy of their joint distribution.
\subsubsection{Relative Entropy}
Let $\rho _{1}$ and $\rho _{2}$ be probability density functions on $\mathbb{R}^{n}$, then the \textit{relative entropy} of $\rho _{1}$ with respect to $\rho _{2}$ is defined to be 
\begin{eqnarray}
\mathsf{D}\left( \rho _{1}||\rho _{2}\right) \triangleq \int \rho _{1}\ln \frac{\rho
_{1}}{\rho _{2}} 
\equiv \mathbb{E}_1 [ \Phi_2 - \Phi_1 ].
\end{eqnarray}
This is also known as the \textit{Kullback-Leiber divergence}, the \textit{entropy of discrimination} and the \textit{information distance}.
We note the Gibbs' inequality $\mathsf{D}\left( \rho _{1}||\rho _{2}\right) \geq 0$ with equality if and only if $\rho
_{1}=\rho _{2}$.

\subsubsection{Mutual Entropy}
Given a pair of random variables $X$ and $Y$ with joint density $\rho_{X,Y}$ and marginals $\rho_X, \, \rho_Y$ respectively, the \textit{mutual information} of $X$ and $Y$ is defined to be 
\begin{eqnarray}
\mathsf{I}\left( X;Y\right) \triangleq \mathsf{I}\left( Y;X\right) =\mathsf{D}\left( \rho _{X,Y}||\rho _{X}\rho
_{Y}\right) .
\end{eqnarray}
In detail, we have 
\begin{eqnarray*}
\mathsf{I}\left( X;Y\right) =\int \rho _{X,Y}\ln \left[ \frac{\rho _{X,Y}}{\rho
_{X}\rho _{Y}}\right]\equiv  \mathsf{H}\left( X\right) +\mathsf{H}\left( Y\right) -\mathsf{H}\left( X,Y\right)   .
\end{eqnarray*}
The mutual information is the difference from the actual joint distribution to
the product of the marginals. In this sense, it is a measure of the degree
of correlation. As an immediate consequence of the Gibbs' inequality, we
have that the mutual information is non-negative definite, i.e. $\mathsf{I}\left(
X,Y\right) \geq 0$ and we have inequality if and only if $X$ and $Y$ are
independent.

\subsubsection{Conditional Entropy}
The \textbf{conditional information} of a
random variable $X$ given $Y$ is defined to be 
\begin{eqnarray}
\mathsf{H}\left( X|Y\right) \triangleq -\int \rho _{X,Y}\ln \left[ \frac{\rho _{X,Y}}{%
\rho _{Y}}\right] 
\mathsf{H}\left( X|Y\right) =\mathsf{H}\left( X,Y\right) -\mathsf{H}\left( Y\right)
\label{eq:cond_entropy+id}
\end{eqnarray}
If $X$ and $Y$ are independent, then $\mathsf{H}\left( X|Y\right) =\mathsf{H}\left( X\right) $
and $\mathsf{H}(Y|X)=\mathsf{H}(Y)$, and in particular, $\mathsf{I}(X;Y ) \equiv 0$.

The mutual information satisfies the following relation: 
\begin{eqnarray}
\mathsf{I}\left( X;Y\right) &=&\mathsf{H}\left( X\right) -\mathsf{H}\left( X|Y\right) .
\label{eq:mutual_cond}  
\end{eqnarray}

\subsubsection{Fisher Entropy}
A final form of entropy that will be of relevance to use is the Fisher information, specifically the translation Fisher information we recall below.
Let $\{ \rho _{X}^{\theta } : \theta \in \Theta \}$ be a family of probability density functions for a random variable $X$
parametrized by a parameter taking values in an open subset $\Theta$ of $\mathbb{R}^p$. We say that $\theta \mapsto \rho _{X}^{\theta }\left(
x\right) $ is the \textit{likelihood} of $\theta $ given the observation $x$.
\index{likelihood} 
The \textit{score} is defined to be the random vector in $\mathbb{R}^{p}$
with components 
\index{score}
\begin{eqnarray}
s_{i}^{\theta }\left( X\right) \triangleq \frac{\partial }{\partial \theta
_{i}}\ln \rho _{X}^{\theta }.
\end{eqnarray}
It is easy to see that $\mathbb{E}_{\theta }\left[ s_{i}^{\theta }\left(
X\right) \right] \equiv 0$.
The \textit{Fisher information} is the $p\times p$ matrix with entries 
\begin{eqnarray}
\mathsf{F}_{ij}^{\theta }\left( X\right) \triangleq \mathbb{E}_{\theta }%
\left[ s_{i}^{\theta }\left( X\right) s_{j}^{\theta }\left( X\right) \right]
.
\end{eqnarray}
In other words, the Fisher information is the covariance matrix for the
score variable. In detail, we have 
\begin{eqnarray*}
\mathsf{F}_{ij}^{\theta }\left( X\right) =\int \rho _{X}^{\theta }\left( 
\frac{\partial }{\partial \theta _{i}}\ln \rho _{X}^{\theta }\right) \left( 
\frac{\partial }{\partial \theta _{j}}\ln \rho _{X}^{\theta }\right) .
\end{eqnarray*}

Let us suppose that we know that the distribution of a random $n$-vector is
known up to translation. That is, we have a fixed distribution $\rho $ and
the actual distribution is of the form 
\begin{eqnarray*}
\rho ^{\theta }\left( x\right) =\rho \left( x-\theta \right)
\end{eqnarray*}
where $\theta \in \mathbb{R}^{n}$ is unknown parameter. 
In this case the
score is $s^{\theta }\left( X\right) =\nabla _{\theta }\ln \rho \left(
x-\theta \right) $ but this can be equivalently written as $-\nabla _{x}\ln
\rho \left( x-\theta \right) $. 
We shall refer to the associated Fisher information matrix in this case as the \textit{translational Fisher information}, and denote it by $\mathsf{J}(X)$. It takes the
form 
\begin{eqnarray*}
\mathsf{F}_{ij}^{\theta }\left( X\right) =\mathsf{J}_{ij} \left( X\right) \triangleq\int \rho \left( x-\theta \right)
s_{i}^{\theta }\left( X\right) s_{j}^{\theta }\left( X\right) \,dx
\end{eqnarray*}
however we can perform the volume preserving change of variable $x^{\prime
}=x-\theta $ to get the simplified form 
\begin{eqnarray}
\mathsf{J}_{ij}\left( X\right) &=&\int \rho \frac{\partial \ln \rho }{%
\partial x_{i}}\frac{\partial \ln \rho }{\partial x_{j}} .
\label{eq:Fisher_J}
\end{eqnarray}
We also have $\mathsf{J}_{ij}\left( X\right) \equiv \int \frac{1}{\rho }\frac{\partial \rho }{\partial x_{i}}\frac{%
\partial \rho }{\partial x_{j}} \equiv  -\int \rho \frac{\partial ^{2}}{%
\partial x _{i}\partial x _{j}}\ln \rho $. Note that this Fisher information does not depend on the value $\theta $,
only on the representative distribution $\rho $. We also note the rescaling law
$ \mathsf{J}\left( aX+b\right) =\frac{1}{a^{2}}\mathsf{J}\left( X\right) $.
The well-known Cram\'{e}r-Rao inequality is equivalent to $Cov\left( X\right) \geq  \mathsf{J}\left( X\right) ^{-1}$.

An important connection between Shannon and Fisher information arises in the context of Gaussian perturbations.
We have, for instance, \textit{De Bruijn's Identity} which states that if $X$ is a random $n$-vector and $Z$ be a standard
Gaussian $n$-vector independent of $X$, then 
\begin{eqnarray}
  \frac{d}{dt}\mathsf{H}\left( X+\sqrt{t}Z\right)  =\frac{1}{2}%
\mathrm{tr}\mathsf{J}\left( X + \sqrt{t} Z \right) .
\label{eq:De_Bruijn}
\end{eqnarray}

We also note the log-Sobolev inequality for Fisher information. Let $X$ be a random $n$-vector and suppose that $\mathsf{J}\left( X+\sqrt{t}Z\right) $ exists and is differentiable in $t$
in a neighborhood of 0 for arbitrary standard Gaussian $n$-vector
perturbations $Z$ independent of $X$. Then 
\begin{eqnarray}
\left. \frac{d}{dt}\frac{n}{\mathsf{J}\left( X+\sqrt{t}Z\right) }\right|
_{t=0}\geq 1.
\label{eq:log_Sob}
\end{eqnarray}

\subsection{Stochastic Dilations}

We fix a probability space $\left( \Omega ,\mathcal{A},\mathbb{P}\right) $.
Let $\mathcal{F}=\left\{ \mathcal{F}_{0}^{t}\right\} _{t\geq 0}$ be a
filtration on a sample space $\Omega $, that is, an increasing and
right-continuous family of $\sigma $-subalgebras of $\mathcal{A}$, with each 
$\mathcal{F}_{0}^{t}$ containing all sets of measure zero. As usual, we say
that a process $X=\left\{ X\left( t\right) \right\} _{t\geq 0}$ is $\mathcal{%
F}$-adapted if $X\left( t\right) $ is $\mathcal{F}_{0}^{t}$-measurable for
each $t\geq 0$. We say that $M$ is a martingale wrt. the filtration if it is
an integrable process such that $\mathbb{E}\left[ M\left( t\right) |\mathcal{%
F}_{0}^{s}\right] \equiv M\left( s\right) $ for all $0\leq s\leq t$.

More generally, if $X$ is a stochastic process, we define its \textit{%
conditional derivative wrt. the filtration} as 
\begin{eqnarray}
\mathbb{D}_{\mathcal{F}}X\left( t\right) \triangleq \lim_{\tau \rightarrow
0^{+}}\mathbb{E} [ \frac{X\left( t+\tau \right) -X\left( t\right) }{\tau }%
\mid \mathcal{F}_{0}^{t} ] .
\end{eqnarray}
We may refer to $\mathbb{D}_{\mathcal{F}}X\left( t\right) $ as the
conditional velocity wrt. the filtration. Martingales are then processes
with zero conditional velocity (or, equivalently, vanishing trend). For a
pair of processes $X$ and $Y$, we define their \textit{angular bracket wrt.
the filtration} as 
\begin{eqnarray}
\left\langle X,Y\right\rangle _{t,\mathcal{F}}\triangleq \int_{0}^{t}%
\mathbb{E}\left[ dX\left( s\right) dY\left( s\right) |\mathcal{F}_{0}^{s}%
\right] ds.
\end{eqnarray}

Given an operator $\mathcal{L}$ from the smooth functions $\mathscr{C}%
^{\infty }(\mathbb{R}^{n})$ to the bounded continuous function on $\mathbb{R}%
^{n}$, then an $\mathbb{R}^{n}$-valued process, $X$, is said to solve the
martingale problem associated with $\mathcal{L}$ if, for each $f\in %
\mathscr{C}^{\infty }(\mathbb{R}^{n})$, the equation 
\begin{eqnarray}
f\left( X\left( t\right) \right) =f\left( X\left( 0\right) \right)
+\int_{0}^{t}\mathcal{L}f\,\left( X\left( s\right) \right) ds+M_{f}\left(
t\right)
\end{eqnarray}
\textit{defines} a martingale process $M_{f}$. 

In the following we will be interested in with diffusion processes on $%
\mathbb{R}^{n}$, where the associated operator $\mathcal{L}$ is elliptic.
Here we have 
\begin{eqnarray}
\mathbb{D}_{\mathcal{F}}f(X\left( t\right) )\equiv \mathcal{L} f (X\left(
t\right) )
\end{eqnarray}
and the angular bracket takes the form 
\begin{eqnarray}
\frac{d}{dt}\left\langle f\left( X\right) ,g\left( X\right) \right\rangle
_{t,\mathcal{F}}\equiv \left. 2\Gamma _{\mathcal{L}}\left( f,g\right)
\right| _{X\left( t\right) }
\end{eqnarray}
where the \textit{co-metric} is defined to be 
\begin{eqnarray}
\Gamma _{\mathcal{L}}\left( f,g\right) \triangleq \mathcal{L}\left(
fg\right) -\left( \mathcal{L}f\right) g-f\left( \mathcal{L}g\right) .
\end{eqnarray}
The co-metric quantifies the obstruction to $\mathcal{L}$ being a tangent
vector field since $\Gamma _{\mathcal{L}}\left( f,g\right) \equiv 0$ for $%
\mathcal{L}=v^{i}\left( x\right) \partial _{i}$. It has the following
properties:

\begin{enumerate}
\item  $\Gamma _{\mathcal{L}+\mathcal{K}}= \Gamma _{\mathcal{L}}+\Gamma _{%
\mathcal{K}}$;

\item  it is symmetric $\Gamma _{\mathcal{L}}\left( f,g\right) =\Gamma _{%
\mathcal{L}}\left( g,f\right) $;

\item  off-diagonal terms are determined through polarization: 
\begin{eqnarray}
2\Gamma _{\mathcal{L}}\left( f,g\right) =\Gamma _{\mathcal{L}}\left(
f+g,f+g\right) -\Gamma _{\mathcal{L}}\left( f-g,f-g\right) .
\end{eqnarray}
\end{enumerate}

The following is easily verified.

\begin{proposition}
Let $\mathcal{K}f=\alpha f+\beta ^{i}f_{,i}+\frac{1}{2}\gamma ^{ij}f_{,ij}$,
for functions $\alpha ,\beta ^{i}$ and $\gamma ^{ij}=\gamma ^{ji}$, then 
\begin{eqnarray}
\Gamma _{\mathcal{K}}\left( f,g\right) =-\alpha fg+\gamma ^{ij}f_{,i}g_{,j}.
\end{eqnarray}
\end{proposition}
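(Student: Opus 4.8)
The plan is to prove this by a direct application of the definition $\Gamma_{\mathcal{K}}(f,g)\triangleq \mathcal{K}(fg)-(\mathcal{K}f)g-f(\mathcal{K}g)$ together with the Leibniz rule for first and second derivatives. First I would expand $\mathcal{K}(fg)$ term by term: the zeroth-order piece contributes $\alpha fg$; the first-order piece contributes $\beta^i(fg)_{,i}=\beta^i(f_{,i}g+fg_{,i})$; and the second-order piece contributes $\tfrac12\gamma^{ij}(fg)_{,ij}=\tfrac12\gamma^{ij}\bl f_{,ij}g+f_{,i}g_{,j}+f_{,j}g_{,i}+fg_{,ij}\br$, using the ordinary product rule twice.

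Next I would write out $(\mathcal{K}f)g=\alpha fg+\beta^i f_{,i}g+\tfrac12\gamma^{ij}f_{,ij}g$ and $f(\mathcal{K}g)=\alpha fg+\beta^i fg_{,i}+\tfrac12\gamma^{ij}fg_{,ij}$, and subtract both from the expansion of $\mathcal{K}(fg)$. The bookkeeping then shows that the three copies of $\alpha fg$ combine to $-\alpha fg$; all first-order terms $\beta^i f_{,i}g$ and $\beta^i f g_{,i}$ cancel in pairs; and among the second-order terms the $\tfrac12\gamma^{ij}f_{,ij}g$ and $\tfrac12\gamma^{ij}fg_{,ij}$ pieces cancel, leaving only the cross term $\tfrac12\gamma^{ij}\bl f_{,i}g_{,j}+f_{,j}g_{,i}\br$.

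Finally I would invoke the symmetry hypothesis $\gamma^{ij}=\gamma^{ji}$: relabelling the summation indices $i\leftrightarrow j$ in the term $\gamma^{ij}f_{,j}g_{,i}$ and using $\gamma^{ij}=\gamma^{ji}$ gives $\gamma^{ij}f_{,j}g_{,i}=\gamma^{ij}f_{,i}g_{,j}$, so that $\tfrac12\gamma^{ij}\bl f_{,i}g_{,j}+f_{,j}g_{,i}\br=\gamma^{ij}f_{,i}g_{,j}$. This yields $\Gamma_{\mathcal{K}}(f,g)=-\alpha fg+\gamma^{ij}f_{,i}g_{,j}$, as claimed. There is no genuine obstacle here — the only point requiring a moment's care is the symmetrization of the cross term, which is precisely where the assumption $\gamma^{ij}=\gamma^{ji}$ is used; everything else is a routine cancellation. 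One may also remark as a sanity check that the formula is manifestly symmetric in $f$ and $g$ and that it reproduces $\Gamma_{\mathcal{L}}\equiv 0$ when $\mathcal{K}$ is a pure first-order field ($\alpha=0$, $\gamma^{ij}=0$), consistent with the discussion preceding the statement.
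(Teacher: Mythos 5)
Your proof is correct and is exactly the direct verification the paper has in mind (the paper simply states that the proposition "is easily verified" and omits the computation). Expanding $\mathcal{K}(fg)-(\mathcal{K}f)g-f(\mathcal{K}g)$ via the Leibniz rule and symmetrizing the cross term with $\gamma^{ij}=\gamma^{ji}$ is the intended argument, and your bookkeeping is accurate.
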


We remark that there is an analogue of the Ricci Tensor  due to Bakry-Emery tensor $%
\Gamma _{\mathcal{L}}^{2}\left( f,g\right) \triangleq \mathcal{L}\left(
\Gamma _{\mathcal{L}}\left( f,g\right) \right) -\Gamma _{\mathcal{L}}\left( 
\mathcal{L}f,g\right) \ -\Gamma _{\mathcal{L}}\left( f,\mathcal{L}g\right) $, see for instance \cite{Ledoux}.
We should remark that Bakry and Emery have developed a theory on inequalities based on their $\Gamma^2$ tensor
which generalize the log-Sobolev inequality (\ref{eq:log_Sob}).

We note that we have $\frac{d}{dt}\mathbb{E}\left[ f\left( X\left( t\right)
\right) \right] =\mathbb{E}\left[ \mathcal{L}f\,\left( X\left( t\right)
\right) \right] $. If define the probability density $\rho _{t}$ for $%
X\left( t\right) $, so that 
\begin{eqnarray}
\mathbb{E}\left[ f\left( X\left( t\right) \right) \right] \equiv \int
f\left( x\right) \rho _{t}\left( x\right) dx
\end{eqnarray}
then the density satisfies the Fokker-Planck equation 
\begin{eqnarray}
\frac{\partial }{\partial t}\rho _{t}=\mathcal{L}^{\star }\rho _{t}
\label{eq:Fokker_Planck}
\end{eqnarray}
where the adjoint operator is defined by the duality 
\begin{eqnarray}
\int \mathcal{L}f\left( x\right) \rho \left( x\right) dx=\int f\left(
x\right) \mathcal{L}^{\star }\rho \left( x\right) dx.
\end{eqnarray}

\section{Entropy Production For Markov Diffusions}

\subsection{Markov Diffusions}

Let us specialize to a state process, $X$, which is diffusion on $\mathbb{R}%
^{n}$ satisfying a stochastic differential equation of the form 
\begin{eqnarray}
dX^{i}\left( t\right) =v^{i}\left( X\left( t\right) \right) dt+\sum_{\alpha
=1}^{r}B_{\alpha }^{i}\left( X\left( t\right) \right) dW^{\alpha }\left(
t\right)  \label{eq:SDE}
\end{eqnarray}
where $W=\left( W^{1},\cdots ,W^{r}\right) $ is an $r$-dimensional Wiener
process (the dynamical noise). The associated operator is then 
\begin{eqnarray}
\mathcal{L}=v^{i}\left( x\right) \partial _{i}+\frac{1}{2}\Sigma ^{ij}\left(
x\right) \partial _{ij},
\end{eqnarray}
where the diffusion tensor field is 
\begin{eqnarray}
\Sigma ^{ij}\left( x\right) =\sum_{\alpha }B_{\alpha }^{i}\left( x\right)
B_{\alpha }^{j}\left( x\right) .
\end{eqnarray}
The diffusion tensor, $\Sigma $, then turns out to be the co-metric tensor as we
have 
\begin{eqnarray}
\Gamma _{\mathcal{L}}\left( f,g\right) \equiv \sum_{\alpha }B_{\alpha
}\left( f\right) B_{\alpha }\left( g\right) =\Sigma ^{ij}f_{,i}g_{,j}=\left(
\nabla f\right) ^{\top }\Sigma \left( \nabla g\right)
\end{eqnarray}
where we have the $r$ tangent vector fields $B_{\alpha }$ with $B_{\alpha
}\left( f\right) =B_{\alpha }^{i}\left( x\right) \,\left( \partial
_{i}f\right) $. The adjoint generator is now given by 
\begin{eqnarray}
\mathcal{L}^{\star }\rho \equiv -\sum_{i}\partial _{i}\left( \rho
v^{i}\right) +\frac{1}{2}\sum_{i,j}\partial _{ij}^{2}\left( \Sigma ^{ij}\rho
\right) .
\end{eqnarray}

A general feature of Markov diffusions is that we now have the additional
properties:

\begin{enumerate}
\item  it is a bi-derivation: $\Gamma _{\mathcal{L}}\left(
f,gh\right) =\Gamma _{\mathcal{L}}\left( f,g\right) h+g\Gamma _{\mathcal{L}%
}\left( f,h\right) $;

\item  it is positive semi-definite: $\Gamma _{\mathcal{L}}\left( f,f\right)
\geq 0$.;
\end{enumerate}

\begin{lemma}
The Fokker-Planck equation is equivalent to the continuity equation 
\begin{eqnarray}
\frac{\partial \rho _{t}}{\partial t}+ \nabla . J_t =0
\end{eqnarray}
where the probability flux density $J_t$ has the components 
\begin{eqnarray}
J^{i}\left( x,t\right) =v^{i}\left( x\right) \rho _{t}\left( x\right) -\frac{%
1}{2}\frac{\partial }{\partial x^{j}}\left( \Sigma ^{ij}\left( x\right) \rho
_{t}\left( x\right) \right) .
\end{eqnarray}
\end{lemma}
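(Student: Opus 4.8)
The plan is to read off the result directly from the adjoint form of the Fokker--Planck equation, exhibiting $\mathcal{L}^{\star}\rho_t$ as minus a divergence. Recall that
\begin{eqnarray}
\mathcal{L}^{\star}\rho_t = -\sum_i \partial_i\left( \rho_t v^{i}\right) + \frac{1}{2}\sum_{i,j}\partial_i\partial_j\left( \Sigma^{ij}\rho_t\right).
\end{eqnarray}
The first term is already of the form $-\partial_i(\,\cdot\,)$. For the second term, the key observation is simply that $\partial_{ij}^{2}=\partial_i\partial_j$, so it equals $\sum_i \partial_i\left( \frac{1}{2}\sum_j \partial_j\left( \Sigma^{ij}\rho_t\right)\right)$; symmetry $\Sigma^{ij}=\Sigma^{ji}$ makes the assignment of the outer derivative to the index $i$ unambiguous. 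Combining the two contributions gives $\mathcal{L}^{\star}\rho_t = -\sum_i \partial_i J^{i}(\,\cdot\,,t)$ with $J^{i}$ exactly as stated, so the Fokker--Planck equation $\partial_t\rho_t = \mathcal{L}^{\star}\rho_t$ becomes the continuity equation $\partial_t\rho_t + \nabla\cdot J_t = 0$.

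Conversely, substituting the stated expression for $J_t$ back into the continuity equation and expanding $\nabla\cdot J_t$ by the Leibniz rule recovers $\partial_t\rho_t = \mathcal{L}^{\star}\rho_t$, so the two equations are genuinely equivalent rather than one merely implying the other. I would also remark in passing that this makes explicit the natural splitting of the probability current into a convective (drift) part $v^{i}\rho_t$ and a diffusive, Fick-type part $-\frac{1}{2}\partial_j(\Sigma^{ij}\rho_t)$.

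There is no real obstacle here: the entire content is the elementary rewriting of the second-order operator in divergence form, and the only things to watch are the sign and the factor $\frac{1}{2}$ when the two derivatives of the diffusion term are regrouped. The one point worth a sentence is regularity — the manipulation presumes $\Sigma^{ij}\rho_t$ is twice differentiable in $x$ and $\rho_t$ once differentiable in $t$, which is guaranteed by the standing smoothness and ellipticity hypotheses on $v$ and the $B_{\alpha}$, so I would simply invoke those rather than belabour them.
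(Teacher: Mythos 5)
Your proof is correct and is exactly the argument the paper intends (the paper states the lemma without proof, as an immediate reading of the displayed adjoint generator $\mathcal{L}^{\star}\rho = -\sum_i\partial_i(\rho v^i)+\frac{1}{2}\sum_{i,j}\partial^2_{ij}(\Sigma^{ij}\rho)$ in divergence form). Both the regrouping of the second-order term and the converse check are fine, so nothing further is needed.
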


We remark that the flux may also be written in the form 
\begin{eqnarray}
J_t = \rho_t u - \frac{1}{2} \Sigma \nabla \rho_t ,
\end{eqnarray}
where we introduce a new velocity field $u$ having the components 
\begin{eqnarray}
u^{i} \triangleq v^{i}-\frac{1}{2}\frac{\partial \Sigma ^{ij}}{\partial x^{j}%
}.
\end{eqnarray}

Let us remark that the dual generator may be written in terms of the new
velocity $u$ as 
\begin{eqnarray}
\mathcal{L}^{\star }f=-\nabla .\left( uf\right) +\frac{1}{2}\Sigma
^{ij}f_{,ij},
\end{eqnarray}
and in particular 
\begin{eqnarray}
\mathcal{L}^{\star }1=- (\nabla .u) .
\end{eqnarray}

If, moreover, $\rho_t (x)$ is strictly positive for all $x$ and $t\ge 0$,
then we may write the flux as 
\begin{eqnarray}
J_t =\rho _{t} \left[ u -\frac{1}{2}\Sigma \,\left( \nabla \ln \rho _{t}
\right) \right] .  \label{eq:J_u1}
\end{eqnarray}
In this case we can write $\rho_t (x) \equiv e^{- \Phi_t (x)}$, where $%
\Phi_t $ may be referred to as the \textbf{surprise potential}. As $\Phi_t =
- \ln \rho_t$, we see that 
\begin{eqnarray}
J_t \equiv \rho _{t}\left[ u -\frac{1}{2}\Sigma \,\left( \nabla \ln \rho
_{t} \right) \right] = \rho _{t}\left[ u +\frac{1}{2}\Sigma \,\left( \nabla
\Phi _{t} \right) \right]  \label{eq:J_u}
\end{eqnarray}

\begin{definition}
We say that the Markov diffusion admits a \textbf{steady state} it there is
probability density, $\rho _{\text{ss}}\left( x\right) $, such that $%
\mathcal{L}^{\star }\rho _{\text{ss}}\equiv 0$.
\end{definition}

If a steady state exists, then the Fokker-Planck equation tells us that it
is invariant and we see that the flux should vanish for this state leaving 
\begin{eqnarray}
u\left( x\right) =\frac{1}{2}\Sigma \left( x\right) \,\nabla \ln \rho _{%
\text{ss}}\left( x\right) = - \frac{1}{2}\Sigma \left( x\right) \,\nabla
\Phi _{\text{ss}}\left( x\right).  \label{eq:u_Phi_ss}
\end{eqnarray}
If we further assume that the steady state is strictly positive then we may
likewise introduce a \textbf{steady state surprise potential} function $\Phi
_{\text{ss}}\left( x\right) =-\ln \rho _{\text{ss}}\left( x\right) $. Note
that we may write 
\begin{eqnarray}
J_t &=& \frac{1}{2} \rho_t \Sigma \, \nabla \left( \Phi_t - \Phi_{\text{ss}}
\right) = -\frac{1}{2} \rho _{t}\Sigma \,\left( \nabla \ln \frac{\rho _{t}}{%
\rho _{\text{ss}}}\right) .
\end{eqnarray}

\subsection{Entropy Production for General Diffusions}

The entropy may be interpreted as the uncertainty, and for diffusions one
would expect the entropy associated with the diffusing variable $X(t)$ to
increase as $t$ increases. Not least because we are continuously injecting
more noise into the dynamics. Indeed, we have a continuous Gaussian
perturbation. We will compute the rate of change of the entropy $\mathsf{H_t}%
=\mathsf{H}\left( X(t)\right) $ associated with the probability density $%
\rho _{t}$ of the diffusion (\ref{eq:SDE}) in this section. We begin,
however, with the deterministic case.

Suppose that we have a deterministic flow generated by a velocity vector
field $v=v^{i}\left( x\right) \partial _{i}$. The dynamics is then given by
the system of ODEs $\frac{d}{dt}X^{i}\left( t\right) =v^{i}\left( X\left(
t\right) \right) $. In this case, the Fokker-Planck equation reduces to the
continuity equation with the simplified flux $J_{t}\left( x\right) =v\left(
x\right) \rho _{t}\left( x\right) $. We then have 
\begin{eqnarray}
\frac{d}{dt}\mathsf{H}\left( X\left( t\right) \right) &=&-\frac{d}{dt}\int
\rho _{t}\ln \rho _{t}=-\int \left( 1+\ln \rho _{t}\right) \frac{\partial
\rho _{t}}{\partial t}=\int \left( 1+\ln \rho _{t}\right) \nabla \left( \rho
_{t}v\right)  \notag \\
&=&-\int \left( \nabla \ln \rho _{t}\right) ^{\top }\rho _{t}v=-\int \left(
\nabla \rho _{t}\right) ^{\top }v=\int \rho _{t}\nabla .v.
\end{eqnarray}
In other words, 
\begin{eqnarray}
\frac{d}{dt}\mathsf{H}\left( X\left( t\right) \right) =\mathbb{E}\left[
\left. \left( \nabla .v\right) \right| _{X\left( t\right) }\right] .
\end{eqnarray}
The rate of production of entropy is therefore the average of the divergence
of the velocity field. This makes intuitive sense as $\nabla .v$ measures
the local change of volume, so its average gives a global measure of change
in uncertainty. In the special case of an incompressible fluid $\left(
\nabla .v\equiv 0\right) $ we have zero entropy production - this includes
Hamiltonian systems by virtue of Liouville's Theorem.

Now let us treat the diffusive case.

\bigskip

\begin{lemma}
\label{eq:L_logf} Let $f$ be a strictly positive twice continuously
differentiable function, then for a Markov diffusion generator, $\mathcal{L}$%
, 
\begin{eqnarray}
\mathcal{L}\ln f=\frac{1}{f}\mathcal{L}f-\frac{1}{2}\Gamma _{\mathcal{L}%
}\left( \ln f,\ln f\right) .
\end{eqnarray}
\end{lemma}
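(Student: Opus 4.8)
The plan is to reduce the identity to a pointwise computation by setting $g = \ln f$, so that $f = e^{g}$ with $g$ twice continuously differentiable (since $f$ is strictly positive and $C^{2}$). First I would record the two elementary chain-rule identities $f_{,i} = f\,g_{,i}$ and $f_{,ij} = f\,(g_{,i}g_{,j} + g_{,ij})$, obtained by differentiating $f = e^{g}$ once and then again.

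Next I would substitute these into the explicit form of the generator $\mathcal{L}f = v^{i}f_{,i} + \frac{1}{2}\Sigma^{ij}f_{,ij}$ and regroup the terms as
\begin{equation}
\mathcal{L}f = f\left( v^{i}g_{,i} + \frac{1}{2}\Sigma^{ij}g_{,ij} \right) + \frac{1}{2}f\,\Sigma^{ij}g_{,i}g_{,j} = f\,\mathcal{L}g + \frac{1}{2}f\,\Gamma_{\mathcal{L}}(g,g),
\end{equation}
where the last step uses the Proposition above with $\alpha = 0$, $\beta^{i} = v^{i}$, $\gamma^{ij} = \Sigma^{ij}$, which gives $\Gamma_{\mathcal{L}}(g,g) = \Sigma^{ij}g_{,i}g_{,j}$. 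Dividing through by the strictly positive function $f$ and rearranging then produces $\mathcal{L}\ln f = \frac{1}{f}\mathcal{L}f - \frac{1}{2}\Gamma_{\mathcal{L}}(\ln f, \ln f)$ exactly.

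An equivalent route, which I would mention in passing, is first to establish the second-order chain rule $\mathcal{L}(\varphi \circ f) = \varphi'(f)\,\mathcal{L}f + \frac{1}{2}\varphi''(f)\,\Gamma_{\mathcal{L}}(f,f)$ for smooth scalar $\varphi$ — itself an immediate consequence of the same derivative identities together with $\Gamma_{\mathcal{L}}(f,f) = \Sigma^{ij}f_{,i}f_{,j}$ — and then specialize to $\varphi = \ln$, using the derivation property of $\Gamma_{\mathcal{L}}$ in the form $\Gamma_{\mathcal{L}}(f,f) = f^{2}\,\Gamma_{\mathcal{L}}(\ln f, \ln f)$. I do not expect any genuine obstacle here: the entire content is the second-order chain rule, and the only thing that needs care is keeping track of the cross term $g_{,i}g_{,j}$, which is precisely the diffusion contribution that would be absent for a pure first-order transport operator (for which the right-hand side collapses to $\frac{1}{f}\mathcal{L}f$, consistent with $\Gamma_{\mathcal{L}} \equiv 0$ in that case).
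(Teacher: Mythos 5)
Your proof is correct and is essentially the same argument as the paper's: a direct pointwise chain-rule computation with the explicit generator $\mathcal{L}=v^{i}\partial_{i}+\frac{1}{2}\Sigma^{ij}\partial_{ij}$, identifying the quadratic cross term $\Sigma^{ij}g_{,i}g_{,j}$ with $\Gamma_{\mathcal{L}}(\ln f,\ln f)$. The only cosmetic difference is direction: the paper differentiates $\ln f$ directly (using $(\ln f)_{,ij}=f_{,ij}/f-f_{,i}f_{,j}/f^{2}$ and $\Gamma_{\mathcal{L}}(f,f)/f^{2}=\Gamma_{\mathcal{L}}(\ln f,\ln f)$), whereas you write $f=e^{g}$, expand $\mathcal{L}f$, and divide by $f$.
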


\begin{proof}
This is actually a straightforward derivation: 
\begin{eqnarray}
\mathcal{L}\ln f &=&\frac{1}{f}v^{\top }\nabla f+\frac{1}{2}\Sigma
^{ij}\left( \frac{1}{f}f_{,ij}-\frac{1}{f^{2}}f_{,i}f_{,j}\right) \\
&=&\frac{1}{f}\mathcal{L}f-\frac{1}{2f^{2}}\Gamma _{\mathcal{L}}\left(
f,f\right) = \frac{1}{f}\mathcal{L}f-\frac{1}{2}\Gamma _{\mathcal{L}}\left(
\ln f,\ln f\right) .
\end{eqnarray}
\end{proof}

\begin{theorem}
\label{thm:dot_H} The entropy production rate for a diffusion process $%
X\left( t\right) $ is given by 
\begin{eqnarray}
\frac{d}{dt}\mathsf{H}\left( X\left( t\right) \right) =\mathbb{E}\left[
\left. \left( \nabla .u\right) \right| _{X\left( t\right) }+\left. \frac{1}{2%
}\Gamma _{\mathcal{L}}\left( \ln \rho _{t},\ln \rho _{t}\right) \right|
_{X\left( t\right) } \right] .  \label{eq:lem_id}
\end{eqnarray}
\end{theorem}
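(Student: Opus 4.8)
The plan is to differentiate the Shannon entropy directly, push the time derivative through the Fokker--Planck equation, and then use Lemma~\ref{eq:L_logf} to separate the answer into a ``drift'' contribution and a ``co-metric'' contribution.

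First I would write $\mathsf{H}(X(t)) = -\int \rho_t \ln \rho_t\,dx$ and differentiate under the integral sign, so that
\begin{eqnarray}
\frac{d}{dt}\mathsf{H}(X(t)) = -\int \big(1+\ln\rho_t\big)\frac{\partial \rho_t}{\partial t}\,dx .
\end{eqnarray}
Since $\int \frac{\partial\rho_t}{\partial t}\,dx = \frac{d}{dt}\int\rho_t\,dx = 0$, the constant term drops out, and substituting the Fokker--Planck equation (\ref{eq:Fokker_Planck}) leaves $\frac{d}{dt}\mathsf{H}(X(t)) = -\int (\ln\rho_t)\,\mathcal{L}^\star\rho_t\,dx$. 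Using the duality $\int (\mathcal{L}f)\rho\,dx = \int f\,(\mathcal{L}^\star\rho)\,dx$ with $f=\ln\rho_t$ to move $\mathcal{L}^\star$ off $\rho_t$, this becomes $-\int \rho_t\,(\mathcal{L}\ln\rho_t)\,dx$.

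Next I would apply Lemma~\ref{eq:L_logf} with $f=\rho_t$, giving $\mathcal{L}\ln\rho_t = \rho_t^{-1}\mathcal{L}\rho_t - \frac{1}{2}\Gamma_{\mathcal{L}}(\ln\rho_t,\ln\rho_t)$, hence
\begin{eqnarray}
\frac{d}{dt}\mathsf{H}(X(t)) = -\int \mathcal{L}\rho_t\,dx + \frac{1}{2}\int \rho_t\,\Gamma_{\mathcal{L}}(\ln\rho_t,\ln\rho_t)\,dx .
\end{eqnarray}
The second term is already $\frac{1}{2}\,\mathbb{E}\big[\Gamma_{\mathcal{L}}(\ln\rho_t,\ln\rho_t)\big|_{X(t)}\big]$. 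For the first term I would invoke the duality once more, now pairing the function $\rho_t$ against the constant density $1$: $\int \mathcal{L}\rho_t\,dx = \int \rho_t\,(\mathcal{L}^\star 1)\,dx = -\int \rho_t\,(\nabla\cdot u)\,dx$, using the identity $\mathcal{L}^\star 1 = -(\nabla\cdot u)$ recorded just above. Thus $-\int \mathcal{L}\rho_t\,dx = \mathbb{E}\big[(\nabla\cdot u)\big|_{X(t)}\big]$, and adding the two pieces yields (\ref{eq:lem_id}).

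The hard part is analytic rather than algebraic: one must justify differentiating under the integral sign and, in the two uses of the duality relation, the vanishing of the boundary terms at infinity, i.e. sufficient decay of $\rho_t$, $\nabla\rho_t$ and $\Sigma\nabla\rho_t$, together with the standing assumption that $\rho_t>0$ everywhere so that $\ln\rho_t$ and $\Gamma_{\mathcal{L}}(\ln\rho_t,\ln\rho_t)$ are well defined. A parallel derivation that makes the boundary issue more transparent is to start from the continuity equation $\partial_t\rho_t + \nabla\cdot J_t = 0$, integrate $\int (\ln\rho_t)\,(\nabla\cdot J_t)\,dx$ by parts to obtain $-\int \nabla\ln\rho_t\cdot J_t\,dx$, and then substitute $J_t = \rho_t u - \frac{1}{2}\rho_t\Sigma\nabla\ln\rho_t$ from (\ref{eq:J_u}); a further integration by parts in the drift piece produces exactly the same two terms.
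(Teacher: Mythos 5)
Your proposal is correct and follows essentially the same route as the paper's own proof: differentiate under the integral, substitute the Fokker--Planck equation, pass by duality to $-\int \rho_t\,\mathcal{L}\ln\rho_t$, apply Lemma~\ref{eq:L_logf}, and identify $-\int\mathcal{L}\rho_t = \int\rho_t\,(\nabla\cdot u)$ via $\mathcal{L}^\star 1 = -(\nabla\cdot u)$. Even your closing alternative via the continuity equation and integration by parts is the same variant the paper records as a remark immediately after the theorem.
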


\begin{proof}
We now have 
\begin{eqnarray*}
\frac{d}{dt}\mathsf{H}\left( X\left( t\right) \right) &=&-\int \left( 1+\ln
\rho _{t}\right) \frac{\partial \rho _{t}}{\partial t}\\
&&=-\int \left( 1+\ln
\rho _{t}\right) \mathcal{L}^{\star }\rho _{t}=-\int \left( \mathcal{L}\ln
\rho _{t}\right) \rho _{t}
\end{eqnarray*}
but, from Lemma \ref{eq:L_logf}, $\mathcal{L}\ln \rho _{t} =\frac{1}{\rho
_{t}}\mathcal{L}\rho _{t}-\frac{1}{2}\Gamma _{\mathcal{L}}\left( \ln \rho
_{t},\ln \rho _{t}\right) $.

Therefore, 
\begin{eqnarray*}
\frac{d}{dt}\mathsf{H}\left( X\left( t\right) \right) =\int \left( -\mathcal{%
L}\rho _{t}+\frac{1}{2}\rho _{t}\Gamma _{\mathcal{L}}\left( \ln \rho
_{t},\ln \rho _{t}\right) \right) .
\end{eqnarray*}
The first term here is $-\int \left( \mathcal{L}\rho _{t}\right) =-\int \rho
_{t}\mathcal{L}^{\star }1=\int \rho _{t}\nabla .u$, and this observation
leads to the desired result.
\end{proof}

\bigskip

Note that this says that the entropy rate consists of a geometric dilation
term (similar to the deterministic case, but now with the
divergence of $u$ instead of $v$) and an additional term which is irreversible in the sense
that it comes from the second-order terms of the generator.

We remark that Theorem \ref{thm:dot_H} may be proved more directly from the
Fokker-Planck equation and an integration by parts, 
\begin{eqnarray*}
\frac{d}{dt}\mathsf{H}_{t} &=&\int \left( 1+\ln \rho _{t}\right) \left(
\nabla .J\right) \\
&=&-\int \left( \nabla \ln \rho _{t}\right) ^{\top }J \\
&=& -\int \rho_t u^{\top }\left( \nabla \ln \rho _{t}\right) +\frac{1}{2}%
\int \rho_t\left( \nabla \ln \rho _{t}\right) ^{\top }\Sigma \left( \nabla
\ln \rho _{t}\right)
\end{eqnarray*}
and the last step follows directly by substituting in (\ref{eq:J_u}) to
eliminate $J_t$. The last term is readily shown to be $\int \rho_t \left(
\nabla . u \right) $ by a simple integration by parts. This proof is very
much related to the textbook proof of the de Bruijn identity, equation (\ref{eq:De_Bruijn}), where the
Gaussian perturbation is viewed as convolution wrt. the fundamental solution to the heat equation (the Wiener kernel!)
and similarly employing the integration by parts step.

Another point of interest is that the term in (\ref{eq:lem_id})
involving the $\Gamma$-operator of the generator with the score (logarithmic
derivative) as argument is evidently a form of Fisher information. In fact, it reduces
to $\text{tr} \, \mathsf{J}\left( X(t)\right) $ in the special case where $%
\Sigma $ is the identity matrix. One can, in fact, show that if the diffusion matrix $\Sigma $ is invertible and $\rho _{t}$ is
strictly positive, then the equation (\ref{eq:lem_id}) may be written as 
\begin{eqnarray}
\frac{d}{dt}\mathsf{H}_{t}=2\int \frac{1}{\rho _{t}}J_t^{\top }\Sigma
^{-1}J_t-2\int u^{\top }\Sigma ^{-1}J_t .  \label{eq:lem_id1}
\end{eqnarray} 

We can now give the appropriate extension of the Mitter-Newton entropy production equations for Markov diffusions
with a steady state.

\begin{definition}
Suppose that a Markov diffusion $X(t)$ admits a strictly positive, steady
state $\rho _{\text{ss}} \equiv e^{-\Phi _{\text{ss}}}$ The \textbf{free
surprise} associated with the diffusion is defined to be the relative
entropy 
\begin{eqnarray}
\mathsf{F}_{t}\triangleq \mathsf{D}\left( \rho _{t}||\rho _{\text{ss}%
}\right) .
\end{eqnarray}
The \textbf{internal surprise} associated with the diffusion $X(t)$ is
defined to be 
\begin{eqnarray}
\mathsf{E}_{t}\triangleq \mathbb{E}\left[ \Phi \left( X(t)\right) \right] .
\end{eqnarray}
\end{definition}

We note that 
\begin{eqnarray}
\mathsf{F}_{t} =\int \rho _{t}\ln \frac{\rho _{t}}{\rho _{\text{ss}}} =
\int \rho _{t}\ln \rho _{t}-\int \rho _{t}\ln \rho _{\text{ss}} ,
\end{eqnarray}
and so the free surprise, internal surprise and entropy, associated with a
diffusion $X(t)$ with steady state $\rho _{\text{ss}}$, are related by 
\begin{eqnarray}
\mathsf{F}_{t}=\mathsf{E}_{t}-\mathsf{H}_{t}.  \label{eq:F=E-H}
\end{eqnarray}
We again have the clear analogy with thermodynamics that was the central motivation of Mitter and Newton
\cite{MN05}. The free and internal
surprises correspond to the free and internal energies, while the
information is the entropy - the relation (\ref{eq:F=E-H}) is then the
standard thermodynamic relation with unit temperature. The free surprise is non-increasing and its rate of change
involves the co-metric.

\begin{theorem}
\label{thm:unobserved_F} Let $\left\{ X\left( t\right) \right\} _{t\geq 0}$
be the diffusion process satisfying the stochastic differential equation (%
\ref{eq:SDE}) and assume that the diffusion matrix $\Sigma $ is everywhere
invertible and that there exists a unique steady state $\rho _{\text{ss}}$.
Then the associated free surprise is non-increasing, and we have 
\begin{eqnarray}
\frac{d}{dt}\mathsf{F}_{t}=-\frac{1}{2}\mathbb{E}\left[ \left. \Gamma _{%
\mathcal{L}}\left( \ln \frac{\rho _{t}}{\rho _{\text{ss}}},\ln \frac{\rho
_{t}}{\rho _{\text{ss}}}\right) \right| _{X\left( t\right) }\right] \leq 0.
\label{eq:dot_F}
\end{eqnarray}
\end{theorem}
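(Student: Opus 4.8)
The plan is to compute $\frac{d}{dt}\mathsf{F}_t$ directly from the definition $\mathsf{F}_t=\int\rho_t\ln(\rho_t/\rho_{\text{ss}})$, rather than going through the decomposition (\ref{eq:F=E-H}); the latter route would require a separate computation of $\frac{d}{dt}\mathsf{E}_t=\frac{d}{dt}\mathbb{E}[\Phi_{\text{ss}}(X(t))]=\int\rho_t\,\mathcal{L}\Phi_{\text{ss}}$, which via Lemma \ref{eq:L_logf} involves the term $\int\rho_t\,\rho_{\text{ss}}^{-1}\mathcal{L}\rho_{\text{ss}}$, and this is messier to handle because $\mathcal{L}\rho_{\text{ss}}$ does not vanish (only $\mathcal{L}^{\star}\rho_{\text{ss}}=0$). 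Differentiating the relative-entropy integrand in time gives $\frac{\partial\rho_t}{\partial t}\ln(\rho_t/\rho_{\text{ss}})+\frac{\partial\rho_t}{\partial t}$, and the second term integrates to zero by conservation of probability, $\int\frac{\partial\rho_t}{\partial t}=\frac{d}{dt}\int\rho_t=0$. Hence $\frac{d}{dt}\mathsf{F}_t=\int\frac{\partial\rho_t}{\partial t}\ln(\rho_t/\rho_{\text{ss}})$.

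Next I would substitute the continuity equation $\frac{\partial\rho_t}{\partial t}=-\nabla\cdot J_t$ and integrate by parts, assuming $J_t$ decays at infinity fast enough that the boundary term vanishes, to obtain $\frac{d}{dt}\mathsf{F}_t=\int J_t^{\top}\,\nabla\ln(\rho_t/\rho_{\text{ss}})$. Then I would insert the representation of the flux for a strictly positive density with a strictly positive steady state established above, namely $J_t=-\tfrac12\rho_t\Sigma\,\nabla\ln(\rho_t/\rho_{\text{ss}})$. Writing $g_t:=\ln(\rho_t/\rho_{\text{ss}})$, this immediately yields $\frac{d}{dt}\mathsf{F}_t=-\tfrac12\int\rho_t\,(\nabla g_t)^{\top}\Sigma(\nabla g_t)$, and recognizing $(\nabla g_t)^{\top}\Sigma(\nabla g_t)=\Gamma_{\mathcal{L}}(g_t,g_t)$ gives exactly (\ref{eq:dot_F}). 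The sign then follows from the positive semi-definiteness of the co-metric (property 2 for Markov diffusions), which is where the fact that $\Sigma$ is a genuine positive-definite diffusion tensor is used.

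The analytic content that needs care — and the only real obstacle — is not the algebra but the justification of these manipulations: (i) that $\rho_t$ is strictly positive for all $t>0$, so that $\ln\rho_t$ and $g_t$ are well defined, which is where the everywhere-invertibility of $\Sigma$ enters, via uniform ellipticity and the strong maximum principle for the Fokker--Planck operator; (ii) that one may differentiate under the integral sign; and (iii) that the integration by parts produces no boundary contribution, which requires suitable decay of $\rho_t$, $\nabla\rho_t$ and $\rho_{\text{ss}}$ at infinity. Under the standing regularity hypotheses these are routine and I would simply state them, invoking the standard existence and positivity theory for uniformly elliptic Fokker--Planck equations for a fully rigorous treatment. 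As consistency checks one can verify that combining (\ref{eq:dot_F}) with Theorem \ref{thm:dot_H} gives a coherent formula for $\frac{d}{dt}\mathsf{E}_t=\frac{d}{dt}(\mathsf{F}_t+\mathsf{H}_t)$, and that in the LQG case (\ref{eq:dot_F}) collapses to the Mitter--Newton expression $\frac{d}{dt}\mathsf{F}_t=-\tfrac12\,\mathrm{tr}\{[V_{\text{ss}}^{-1}-V(t)^{-1}]\Sigma[V_{\text{ss}}^{-1}-V(t)^{-1}]V(t)\}$ recalled in the introduction.
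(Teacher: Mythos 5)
Your derivation reaches (\ref{eq:dot_F}) and is essentially sound, but it takes a genuinely different route from the paper's, and as written it leans on one assumption that the theorem's hypotheses do not grant. The paper differentiates $\mathsf{F}_t$, substitutes the Fokker--Planck equation (\ref{eq:Fokker_Planck}), dualizes to $\int \rho_t\, \mathcal{L}\ln(\rho_t/\rho_{\text{ss}})$, applies Lemma \ref{eq:L_logf} with $f=\rho_t/\rho_{\text{ss}}$, and kills the residual term via $\int \rho_{\text{ss}}\,\mathcal{L}(\rho_t/\rho_{\text{ss}})=\int (\mathcal{L}^{\star}\rho_{\text{ss}})\,\rho_t/\rho_{\text{ss}}=0$; the only property of the steady state it uses is $\mathcal{L}^{\star}\rho_{\text{ss}}=0$. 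You instead use the continuity equation, an integration by parts, and the flux representation $J_t=-\tfrac{1}{2}\rho_t\Sigma\nabla\ln(\rho_t/\rho_{\text{ss}})$. That representation rests on (\ref{eq:u_Phi_ss}), i.e.\ on the claim that the flux vanishes in the steady state; but stationarity only gives $\nabla\cdot J_{\text{ss}}=0$, not $J_{\text{ss}}=0$. Vanishing stationary current is a detailed-balance (reversibility) condition, and it fails already for the paper's own LQG example whenever $AV_{\text{ss}}$ is not symmetric. (The paper does assert (\ref{eq:u_Phi_ss}) loosely in the setup, but its proof of the theorem never invokes it.) So, taken literally, your argument proves the theorem only in the reversible case, whereas the statement assumes nothing beyond existence of a stationary density.

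The gap closes with one extra line inside your own scheme: decompose $J_t=-\tfrac{1}{2}\rho_t\Sigma\nabla\ln\frac{\rho_t}{\rho_{\text{ss}}}+\frac{\rho_t}{\rho_{\text{ss}}}J_{\text{ss}}$, where $J_{\text{ss}}=\rho_{\text{ss}}\left[u-\tfrac{1}{2}\Sigma\nabla\ln\rho_{\text{ss}}\right]$ is the stationary flux. Its contribution to $\int J_t^{\top}\nabla\ln(\rho_t/\rho_{\text{ss}})$ is $\int \frac{\rho_t}{\rho_{\text{ss}}}J_{\text{ss}}^{\top}\nabla\ln\frac{\rho_t}{\rho_{\text{ss}}}=\int J_{\text{ss}}^{\top}\nabla\frac{\rho_t}{\rho_{\text{ss}}}=-\int (\nabla\cdot J_{\text{ss}})\frac{\rho_t}{\rho_{\text{ss}}}=0$, so only the gradient part survives and you recover (\ref{eq:dot_F}) in full generality, now using only $\mathcal{L}^{\star}\rho_{\text{ss}}=0$ exactly as the paper does. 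With that amendment the two proofs are of comparable weight: yours makes the physical structure (current against thermodynamic force) explicit and connects directly to the remark following the theorem, while the paper's avoids any discussion of the stationary current by working with the generator and Lemma \ref{eq:L_logf}. Your listed regularity caveats (strict positivity of $\rho_t$, differentiation under the integral, decay killing boundary terms) are the same ones the paper leaves implicit.
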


\begin{proof}
We have 
\begin{eqnarray*}
\mathsf{F}_{t} &=&\int (1+\ln \rho _{t}-\ln \rho _{\text{ss}})\frac{\partial
\rho _{t}}{\partial t} \\
&=&\int (1+\ln \rho _{t}-\ln \rho _{\text{ss}})\mathcal{L}^{\star }\rho _{t}
\\
&=&\int \left( \mathcal{L}\ln \frac{\rho _{t}}{\rho _{\text{ss}}}\right)
\rho _{t} \\
&=&\int \left[ \frac{\rho _{ss}}{\rho _{t}}\left( \mathcal{L}\frac{\rho _{t}%
}{\rho _{\text{ss}}}\right) -\frac{1}{2}\Gamma _{\mathcal{L}}\left( \ln 
\frac{\rho _{t}}{\rho _{\text{ss}}},\ln \frac{\rho _{t}}{\rho _{\text{ss}}}%
\right) \right] \rho _{t}
\end{eqnarray*}
where we used Lemma \ref{eq:L_logf}. The first of these terms vanishes as it
equals 
\begin{eqnarray*}
\int \rho _{\text{ss}}\left( \mathcal{L}\frac{\rho _{t}}{\rho _{\text{ss}}}%
\right) =\int \left( \mathcal{L}^{\star }\rho _{\text{ss}}\right) \frac{\rho
_{t}}{\rho _{\text{ss}}}=0,
\end{eqnarray*}
since $\mathcal{L}^{\star }\rho _{\text{ss}}\equiv 0$ for a steady state.
\end{proof}

\bigskip

We remark that if $\rho_t$ is strictly positive, and if $\Sigma$ is
everywhere invertible, then the (\ref{eq:dot_F}) is equivalent to 
\begin{eqnarray}
\frac{d}{dt}\mathsf{F}_{t}=-\frac{1}{2}\int \frac{1}{\rho _{t}}J_{t}^{\top
}\Sigma ^{-1}J_{t}\leq 0.
\end{eqnarray}
This may be alternatively proved by noting that $\frac{d}{dt}\mathsf{F}_{t}=%
\frac{d}{dt}\mathsf{E}_{t}-\frac{d}{dt}\mathsf{H}_{t}$. The rate of change
of the internal surprise is 
\begin{eqnarray}
\frac{d}{dt}\mathsf{E}_{t} &=&\frac{d}{dt}\int \rho _{t}\left( x\right) \Phi
_{\text{ss}}\left( x\right) =\int \frac{\partial \rho _{t}\left( x\right) }{%
\partial t}\Phi _{\text{ss}}\left( x\right) \\
&=&-\int \left( \nabla .J_{t}\right) ^{\top }\Phi _{\text{ss}} =\int
J_{t}^{\top }\nabla \Phi _{\text{ss}} =-2\int J_{t}^{\top }\Sigma u,
\end{eqnarray}
where we used the Fokker-Planck equation, integration by parts, and for the
last step (\ref{eq:u_Phi_ss}). The term $\frac{d}{dt}\mathsf{E}_{t}$
therefore corresponds to the second term in (\ref{eq:lem_id1}) which now
reads as $\frac{d}{dt}\mathsf{H}_{t}=\frac{1}{2}\int \frac{1}{\rho _{t}}%
J_{t}^{\top }\Sigma ^{-1}J_{t}+\frac{d}{dt}\mathsf{E}_{t}$, and this gives
the desired result.

\bigskip



\section{Filtering \& Entropy Reduction}

The filtering problem considered here is the following standard one. We have
a diffusion process, $X$, on $\mathbb{R}^{n}$ and make observations leading
to an $\mathbb{R}^{p}$-valued process $Y$. Denoting the $\sigma $-algebra
generated by the observations over the time interval $\left[ 0,t\right] $ by 
$\mathcal{Y}_{0}^{t}$, we wish to compute the filtered estimate 
$\pi _{t}\left( f\right) \triangleq \mathbb{E}\left[ f\left( X\left( t\right)
\right) |\mathcal{Y}_{0}^{t}\right] $.
Equivalently, we may compute the $\mathcal{Y}$-adapted probability density
valued process $\widehat{\rho}$ such that 
$\pi _{t}\left( f\right) \equiv \int f\left( x\right) \widehat{\rho}_{t}\left(
x\right) dx$. 
We refer to $\widehat{\rho}_{t}$ as the \textit{conditioned density} and recall that $%
\mathbb{E}\left[ \widehat{\rho}_{t}\right] \equiv \rho _{t}$.

\bigskip

Let us suppose that the observations satisfy 
\begin{eqnarray}
dY\left( t\right) =H\left( t\right) dt+dU\left( t\right)
\end{eqnarray}
where $H$ is an $\mathbb{R}^{p}$-valued process adapted to the filtration
generated by $X$, and $U$ is a $p$-dimensional Wiener process independent of 
$X$. We encounter the drift term $H\left( t\right) \equiv \mathbb{D}_{%
\mathcal{U}}H\left( t\right) $, where $\mathcal{U}$ is the filtration
generated by the observational noise $U$. However, as we only observe $Y$ it
makes more sense to introduce the new process 
\begin{eqnarray}
\widehat{H}\left( t\right) \triangleq \mathbb{D}_{\mathcal{Y}}Y\left( t\right) .
\end{eqnarray}
A Theorem of Duncan \cite{TD70} states that the mutual information shared between the
signal $H_{0}^{t}$ and the observations $Y_{0}^{i}$ is given by 
\begin{eqnarray}
\mathsf{I}\left( H_{0}^{t};Y_{0}^{t}\right) =\frac{1}{2}\int_{0}^{t}\mathbb{E%
}\left[ \left| \varepsilon \left( H,s\right) \right| ^{2}ds\right]
\end{eqnarray}
where the estimation error is 
\begin{eqnarray}
\varepsilon \left( H,t\right) =H\left( t\right) -\widehat{H}\left( t\right) ,
\end{eqnarray}
or equivalently, $\left( \mathbb{D}_{\mathcal{U}}-\mathbb{D}_{\mathcal{Y}%
}\right) Y\left( t\right) $. The proof makes extensive use of Girsanov transformations \cite{Girsanov}.

\begin{remark}
\label{rem:KB_Duncan}
Duncan's Theorem takes a simple form for the Kalman-Bucy filter. As $H(t) = CX(t)$, we see that $\varepsilon (H,t) \equiv C \varepsilon_t (X)$ where $\varepsilon_t (X)$ is the state error.
Therefore, $\mathbb{E}\left[ \left| e(H, \tau  ) \right| ^{2}\right] = \text{tr} \, \left\{ C^{\top }C \widehat{V} ( \tau ) \right\} $and therefore the mutual information is
\begin{eqnarray}
\mathsf{I}_{\text{K-B}} \left( Y_{0}^{t};X_{0}^{t}\right) =\frac{1}{2}\int_{0}^t\text{tr}\left\{
C \widehat{V} \left( \tau \right) C^{\top }\right\} d\tau .
\label{eq:I_KB}
\end{eqnarray}
\end{remark}


We may also write
\begin{eqnarray}
dY\left( t\right) \equiv \widehat{H} (t)\,dt+dI\left( t\right)
\end{eqnarray}
where the \textit{innovations process} $\left\{ I\left( t\right) \right\}
_{t\geq 0}$ is defined as 
\begin{eqnarray}
dI\left( t\right) &\triangleq &dY\left( t\right) -\widehat{H} (t) \,dt 
\notag \\
&\equiv & dU \left( t\right) +\varepsilon \left( H,t\right) \, dt.  
\label{eq:dI-dW}
\end{eqnarray}
The increment $dI\left( t\right) $ is the difference between the observed
signal, $dY\left( t\right) $, and what we would have expected, $\widehat{H}%
(t) \, dt$, given the observations up to time.

The innovations process is a martingale wrt. the filtration $%
\left\{ \mathcal{Y}_{0}^t \right\} _{t\geq 0}$, and furthermore we have $%
\left( dI\right) ^{2}=\left( dU\right) ^{2}=dt$ so by L\'{e}vy's Theorem it
is a Wiener process with respect to this filtration.  

\subsection{Filtering Markov Diffusions}

We consider the problem of a system with unobserved state $X(t)$ in $\mathbb{%
R}^n$. We assume that it evolves according to the stochastic differential
equation (\ref{eq:SDE}) which we write in the vector form 
\begin{eqnarray}
(\text{State Dynamics}) \qquad dX(t) = v\left( X(t)\right) dt+ B \left(
X(t)\right) dW(t) ,  \label{eq:filter_dynamics}
\end{eqnarray}
with $W$ an $r$-dimensional canonical Wiener process, as before.

Information about the state comes from an observation process in $\mathbb{R}%
^p$ 
\begin{eqnarray}
(\text{Observations)} \qquad \qquad \qquad dY (t) = h\left( X(t)\right)
dt+dU (t) ,  \label{eq:filter_obs}
\end{eqnarray}
or $dY^i (t) = h^i\left( X(t)\right) dt+dU^i (t) $ in component form. Again we assume that $U$ is a $p$-dimensional Wiener process. The dynamical noise $W$ is assumed to be statistically independent of the observational noise $U$ corrupting observed signals. We shall denote by $\mathcal{Y}_0^t$ the $\sigma$-algebra generated by the observations  $\{ Y(s) : 0 \le s \le t\}$. 
To initialize the problem, we may fix a probability measure $\mathbb{P}%
_{X(0), Y(0)}$ for the initial state and observation.

The filter for the problem (\ref{eq:filter_dynamics}, \ref{eq:filter_obs}) can be written as
\begin{eqnarray}
\pi _{t}\left( f\right) \equiv \frac{\sigma
_{t}\left( f\right) }{\sigma _{t}\left( 1\right) }
\label{eq:filter_pi_ratio}
\end{eqnarray}
where $\sigma_t (f)$ as the \textit{unnormalized filter}, or the 
\textit{Zakai filter}. The normalization $\sigma_t (1)$ will be a stochastic
process adapted to the filtration of $Y$.   
The process $\sigma_t (f)$ satisfies the \textit{Zakai equation}
\begin{eqnarray}
d\sigma _{t}\left( f\right)  &=&\sigma _{t}\left( \mathcal{L}f\right)
dt+\sigma _{t} ( fh^{\top } ) \,dY (t) ,
\label{eq:Zakai}
\end{eqnarray}
where $\mathcal{L}$ is the generator of the diffusion $X$.
The normalization process $\sigma _{t}\left( 1\right) $ is given by
\begin{eqnarray}
\sigma _{t}\left( 1\right) =\exp \left\{ \int_{0}^{t}\pi _{s}\left(
h\right) ^{\top }dY\left( s \right) -\frac{1}{2}\int_{0}^{t}
| \pi _{s}\left( h\right) |^2 ds\right\},
\label{eq:sigma_1}
\end{eqnarray}
and, in particular,
\begin{eqnarray}
\mathbb{E}\left[ \ln \sigma _{t}\left( 1\right) \right] =\frac{1}{2}%
\int_{0}^{t}\mathbb{E}\left[  |\pi _{s}\left( h\right) |^2  \right] ds .  \label{q:sigma_1_average}
\end{eqnarray}
  
\bigskip

From the Zakai filter, we may deduce the form of the normalized filter $%
\pi_t(f)$.

\begin{theorem}[Kushner-Stratonovich]
\index{Theorem!Kushner-Stratonovich}
The filter $\pi_t (f)$ satisfies the  
equation
\begin{eqnarray}
d\pi _{t} ( f )  = \pi _{t}\left( \mathcal{L}f\right) dt +\left[
\pi _{t} ( fh^{\top } ) - \pi _{t}\left( f\right) \pi _{t} (
h^{\top } ) \right]  \,dI(t),
\label{eq:Kushner_Stratonovich}
\end{eqnarray}
known as the \textbf{Kushner-Stratonovich} equation, 
where $\{ I(t) : t \ge 0\} $ is the \textbf{innovations process}: 
\begin{eqnarray}
dI(t)\triangleq
dY (t)-\pi _{t}\left( h\right) dt.
\end{eqnarray}
\end{theorem}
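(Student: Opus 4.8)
The plan is to derive the Kushner--Stratonovich equation directly from the Zakai equation (\ref{eq:Zakai}) by applying the It\^o product/quotient rule to the ratio $\pi_t(f) = \sigma_t(f)/\sigma_t(1)$. First I would record the dynamics of the normalization: setting $f = 1$ in (\ref{eq:Zakai}) and using $\mathcal{L}1 = 0$ gives $d\sigma_t(1) = \sigma_t(h^{\top})\,dY(t) = \sigma_t(1)\,\pi_t(h^{\top})\,dY(t)$, consistent with the explicit exponential (\ref{eq:sigma_1}). Since $dY^i\,dY^j = dU^i\,dU^j = \delta^{ij}\,dt$, It\^o's formula applied to $x \mapsto 1/x$ at $x = \sigma_t(1)$ yields $d\big(\sigma_t(1)^{-1}\big) = -\sigma_t(1)^{-1}\pi_t(h^{\top})\,dY(t) + \sigma_t(1)^{-1}|\pi_t(h)|^2\,dt$.

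Next I would apply the product rule to $\pi_t(f) = \sigma_t(f)\,\sigma_t(1)^{-1}$, retaining the It\^o cross term. The term $\big(d\sigma_t(f)\big)\,\sigma_t(1)^{-1}$ gives $\pi_t(\mathcal{L}f)\,dt + \pi_t(fh^{\top})\,dY(t)$; the term $\sigma_t(f)\,d\big(\sigma_t(1)^{-1}\big)$ gives $-\pi_t(f)\pi_t(h^{\top})\,dY(t) + \pi_t(f)|\pi_t(h)|^2\,dt$; and the cross term $\big(d\sigma_t(f)\big)\,\big(d\sigma_t(1)^{-1}\big)$, again using $dY^i\,dY^j = \delta^{ij}\,dt$, gives $-\pi_t(fh^{\top})\pi_t(h)\,dt$. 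Collecting everything, $d\pi_t(f) = \pi_t(\mathcal{L}f)\,dt + \big[\pi_t(fh^{\top}) - \pi_t(f)\pi_t(h^{\top})\big]\,dY(t) + \big[\pi_t(f)|\pi_t(h)|^2 - \pi_t(fh^{\top})\pi_t(h)\big]\,dt$.

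Finally I would rewrite the observation increment in terms of the innovations, $dY(t) = \pi_t(h)\,dt + dI(t)$. Substituting into the stochastic term produces the extra drift $\big[\pi_t(fh^{\top})\pi_t(h) - \pi_t(f)|\pi_t(h)|^2\big]\,dt$, which exactly cancels the residual drift collected in the previous step, leaving $d\pi_t(f) = \pi_t(\mathcal{L}f)\,dt + \big[\pi_t(fh^{\top}) - \pi_t(f)\pi_t(h^{\top})\big]\,dI(t)$, i.e.\ (\ref{eq:Kushner_Stratonovich}). That $I$ is a Wiener process for the filtration $\{\mathcal{Y}_0^t\}$ is the statement recalled immediately above (L\'evy's characterization together with $(dI)^2 = dt$), so the innovations term needs no further justification.

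The work here is essentially bookkeeping rather than conceptual: the only place an error can creep in is the tracking of the two It\^o second-order corrections --- the one in $d\big(\sigma_t(1)^{-1}\big)$ and the product-rule cross term --- and the verification that the quadratic-variation drift terms conspire to vanish once $dY$ is expressed through $dI$. A point to state carefully is the componentwise covariation rule $dY^i\,dY^j = \delta^{ij}\,dt$, inherited from the normalization and independence of $U$, since all three corrections rely on it. One could alternatively start from the Kallianpur--Striebel (Bayes) formula for $\pi_t(f)$ and differentiate, but given that (\ref{eq:Zakai}) is already in hand the quotient-rule route is the most economical.
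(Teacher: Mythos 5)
Your derivation is correct: the bookkeeping of the two It\^o corrections (the quadratic term in $d\bigl(\sigma_t(1)^{-1}\bigr)$ and the product-rule cross term) and the cancellation of the residual drift upon substituting $dY(t)=\pi_t(h)\,dt+dI(t)$ all check out, and the componentwise rule $dY^i\,dY^j=\delta^{ij}\,dt$ is legitimate since quadratic variation does not depend on the change of measure underlying the Zakai form. However, your route is genuinely different from the paper's. The paper (Section \ref{app:A}) uses the \emph{characteristic method}: it posits the ansatz $d\pi_t(f)=\mathcal{G}_t\,dt+\mathcal{H}_t^\top dY(t)$ and determines $\mathcal{G}_t,\mathcal{H}_t$ from the orthogonality of the estimation error $\epsilon_t(f)$ against a family of $\mathcal{Y}$-adapted test processes $dC(t)=C(t)\,g(t)^\top dY(t)$, never invoking the Zakai equation; this is deliberate, since that argument passes essentially verbatim to the controlled generator $\mathcal{L}_{\beta(t)}$ via (\ref{eq:pigtails}), which is what the paper needs for the feedback section. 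Your quotient-rule computation $\pi_t(f)=\sigma_t(f)/\sigma_t(1)$ is shorter and more mechanical, but it takes the Zakai equation (\ref{eq:Zakai}) as an input --- which the paper states without proof --- and to cover the controlled case you would have to start instead from the controlled Zakai equation (\ref{eq:Zakai_zo}). So each approach buys something: yours economy given (\ref{eq:Zakai}), the paper's self-containedness (modulo the martingale-representation ansatz) and immediate extension to feedback.
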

  
For completeness, we give a derivation of the Kushner-Stratonovich equation, (\ref{eq:Kushner_Stratonovich}),
in in Section \ref{app:A}.
We say that the filtering problem admits a \textbf{conditional density} $\widehat{\rho}_t $ if we have
\begin{eqnarray}
\widehat{f}_t ( \omega) =\int f(x) \widehat{\rho}_t (x, \omega )dx.
\end{eqnarray}
In this case the Kushner-Stratonovich equation is equivalent to 
\begin{eqnarray}  \label{eq:KS_density}
d\widehat{\rho}_t (x, \omega ) &=&\mathcal{L}^{\star} \widehat{\rho}_t (x,\omega ) \,
dt \nonumber \\
&&+ \widehat{\rho}_t (x, \omega ) \, \left\{ h^\top (x) -\int h
(x^{\prime})^\top \widehat{\rho} _t (x^{\prime}, \omega )dx^{\prime}\right\} \,
dI(t , \omega ).  \notag \\
\end{eqnarray}
Note that this is \emph{nonlinear} in $\widehat{\rho}_t$.

If we average over all outputs we get an average density $\rho_t$ which is
just the unconditional density for the diffusion $X$, and which satisfies
the Fokker-Planck (Kolmogorov forward) equation, $\frac{d}{dt} \rho _t =\mathcal{L}^\star \rho_t$.

The Zakai filter can similarly be reformulated so that 
\begin{eqnarray}
\sigma_t (f) \equiv \int f(x) \zeta_t (x, \omega ) dx
\end{eqnarray}
for an \emph{non-normalized} stochastic density function $\zeta _{t}$. The
Zakai equation (\ref{eq:Zakai}) is then 
\begin{eqnarray}
d\zeta _{t}(x, \omega ) =\mathcal{L}^\star \zeta _{t} (x, \omega ) \,dt+
\zeta_{t}(x, \omega ) \, h(x)^\top dY (t, \omega ).  \label{eq:Zakai_z}
\end{eqnarray}
Note that this is a \textit{linear} equation for the density $\zeta_t$.

We then have 
\begin{eqnarray*}
\widehat{\rho} _{t} (x , \omega ) = \frac{\zeta _{t}(x , \omega ) }{\int \zeta
_{t}\left( x^{\prime}, \omega \right) dx^{\prime}}.
\end{eqnarray*}
This follows from the It\={o} calculus with the assumption that the
innovations constitute a standard Wiener process. Recall that the
normalization factor is $\int \zeta_t (x, \omega ) dx \equiv \sigma_t (1)$.

\begin{definition}
The \textbf{state error} is defined to be $\varepsilon _{t}\left( X\right) =X\left( t\right) -%
\pi_{t}(X)$, and the \textbf{conditioned covariance matrix} is defined as $\widehat{V}\left(
t\right) =\mathbb{E}\left[ | \varepsilon _{t}\left( X\right) \varepsilon _{t}\left(
X\right) ^{\top }\right] $. 
\end{definition}

The conditioned covariance matrix may also be written as 
\begin{eqnarray}
\widehat{V}_{t}\equiv \pi _{t}\left( XX^{\top }\right) -\pi_t (X) \pi_t (X)
^{\top }.
\end{eqnarray}

More generally, we define the least squares error in estimating $f(X(t))$
from the observations to be 
\begin{eqnarray}
\varepsilon_{t} (f) \triangleq f(X(t))- \mathbb{E} [ f(X(t)) | \mathcal{Y}_{t]}
] =f(X(t)) -\pi_t (f) .  \label{eq:estimate_error}
\end{eqnarray}



\section{Mutual Information \& Continuous Signals}
In this section we shall extend the theory of Mitter and Newton in information flow due to filtering to the setting of Markov diffusions. Certain Fisher information quantities will emerge as for canonical, and a crucial role is played by a Theorem of Mayer-Wolf and Zakai \cite{MWZ}.

\subsection{The Mayer-Wolf \& Zakai Theorem}

We shall now calculate the mutual information $\mathsf{I}\left( X\left(
t\right) ;Y_{0}^{t}\right) $ shared between the signal $X\left( t\right) $
and the measurements up to time $t$. This may be written as 
\begin{eqnarray}
\mathsf{I}\left( X\left( t\right) ;Y_{0}^{t}\right) =\mathbb{E}\left[ \ln 
\frac{\widehat{\rho} _{t}\left( X\left( t\right) \right) }{\rho_{t}\left(
X\left( t\right) \right) }\right]  \label{eq:MWZ1}
\end{eqnarray}
where $\widehat{\rho} _{t}\left( x\right) $ is the (random) conditioned density
function and $\rho_{t}\left(
x\right) $ is the average density function satisfying the Fokker-Planck
equation. Note that in (\ref{eq:MWZ1}), both of these are evaluated at $%
x=X\left( t\right) $. Note that $\widehat{\rho} _{t}\left( X\left( t\right)
\right) $ is the random variable $\omega \mapsto \widehat{\rho} _{t}\left(
X\left( t,\omega \right) ,\omega \right) $.

It turns out to be simpler to use Zakai's non-normalized density $\zeta
_{t}\left( x,\omega \right) $ as this satisfies a linear SDE. We have $\widehat{%
\rho} _{t}\left( x\right) =\frac{\zeta _{t}\left( x\right) }{\sigma
_{t}\left( 1\right) }$ so that 
\begin{eqnarray}
\mathsf{I}\left( X\left( t\right) ;Y_{0}^{t}\right) &=&\mathbb{E}\left[ \ln 
\frac{\zeta _{t}\left( X\left( t\right) \right) }{\rho_{t}\left( X\left(
t\right) \right) }\right] -\mathbb{E}\left[ \ln \sigma _{t}\left( 1\right) %
\right]  \notag \\
&=&\mathbb{E}\left[ \ln \frac{\zeta _{t}\left( X\left( t\right) \right) }{%
\rho_{t}\left( X\left( t\right) \right) }\right] -\frac{1}{2}\int_{0}^{t}%
\mathbb{E}\left[ | \pi _{t}\left( h\right) | ^{2}\right] d\tau .
\label{eq:MWZ2}
\end{eqnarray}

\bigskip

\begin{definition}
The \textbf{a-priori (or unconditional) Fisher information matrix}, $\mathsf{J}_{t}^{\rho }
$, is defined to be
\begin{eqnarray}
\mathsf{J}_{t}^{\rho}=\mathbb{E}\left[ \left. \left( \nabla \ln \rho_{t}\right) \Sigma \left( \nabla \ln \rho_{t}\right) ^{\top
}\right| _{x=X\left( t\right) }\right] .
\end{eqnarray}
The \textbf{a-posteriori (or conditional) Fisher information matrix based on the measurements}, $%
Y_{0}^{t}$, $\mathsf{J}_{t}^{\pi }$, is defined to be
\begin{eqnarray}
\mathsf{J}_{t}^{\pi }=\mathbb{E}\left[ \left. \left( \nabla \ln \widehat{\rho}
_{t}\right) \Sigma \left( \nabla \ln \widehat{\rho} _{t}\right) ^{\top }\right|
_{x=X\left( t\right) }\right] .
\end{eqnarray}
\end{definition}

Note that logarithmic derivatives allow us to use the unnormalized Zakai filter instead:
\begin{eqnarray}
\mathsf{J}_{t}^{\pi }\equiv \mathbb{E}\left[ \left. \left( \nabla \ln \zeta
_{t}\right) \Sigma \left( \nabla \ln \zeta _{t}\right) ^{\top }\right|
_{x=X\left( t\right) }\right] .
\end{eqnarray}

\begin{theorem}[Mayer-Wolf and Zakai]
The rate of change of the mutual information is
\begin{eqnarray}
\frac{d}{dt}\mathsf{I}\left( X\left( t\right) ;Y_{0}^{t}\right) =\frac{1}{2}%
\mathbb{E}\left[  | \varepsilon_t (h)  | ^{2}\right] -\frac{1}{2} \text{tr} \,
\left( \mathsf{J}_{t}^{\pi }-\mathsf{J}_{t}^{\rho}\right) .
\label{eq:MayerWolf_Zakai}
\end{eqnarray}
\end{theorem}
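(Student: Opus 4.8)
The plan is to reduce (\ref{eq:MayerWolf_Zakai}) to a comparison of two entropy production rates, one of which is already furnished by Theorem~\ref{thm:dot_H}. Starting from the representation (\ref{eq:MWZ1}), split the logarithm as $\ln \widehat{\rho}_t(X(t)) - \ln \rho_t(X(t))$. The $\rho_t$-part contributes $\mathbb{E}[\ln \rho_t(X(t))] = \int \rho_t \ln \rho_t = -\mathsf{H}_t$. For the $\widehat{\rho}_t$-part, condition on $\mathcal{Y}_0^t$: the conditional law of $X(t)$ has density $\widehat{\rho}_t$, so by the tower property $\mathbb{E}[\ln \widehat{\rho}_t(X(t))] = \mathbb{E}\big[\int \widehat{\rho}_t \ln \widehat{\rho}_t\,dx\big] = -\mathsf{H}(X(t)\mid \mathcal{Y}_0^t)$. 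Thus (\ref{eq:MWZ1}) is nothing but the conditional-entropy form (\ref{eq:mutual_cond}) of the mutual information, $\mathsf{I}(X(t);Y_0^t) = \mathsf{H}_t - \mathsf{H}(X(t)\mid \mathcal{Y}_0^t)$, and the only genuine work is to compute $\frac{d}{dt}\mathbb{E}\big[\int \widehat{\rho}_t \ln \widehat{\rho}_t\,dx\big]$.

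Next I would apply the It\^{o} formula to $\phi(\widehat{\rho}_t)$, $\phi(y) = y\ln y$, using the Kushner--Stratonovich density equation (\ref{eq:KS_density}). There are three contributions. First, the drift coming from $\mathcal{L}^{\star}\widehat{\rho}_t$: integrating $(1+\ln \widehat{\rho}_t)\mathcal{L}^{\star}\widehat{\rho}_t$ over $x$ and running the argument of Theorem~\ref{thm:dot_H} verbatim --- $\mathcal{L}1 = 0$, then Lemma~\ref{eq:L_logf}, then $\mathcal{L}^{\star}1 = -(\nabla . u)$ --- gives $-\int \widehat{\rho}_t (\nabla . u)\,dx - \frac{1}{2}\int \widehat{\rho}_t\, \Gamma_{\mathcal{L}}(\ln \widehat{\rho}_t, \ln \widehat{\rho}_t)\,dx$. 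Second, the stochastic integral against the innovations $dI(t)$: its integrand is $\mathcal{Y}_0^t$-adapted and $I$ is a $\mathcal{Y}$-Wiener process, so this is a martingale and vanishes in expectation. Third, the second-order It\^{o} term $\frac{1}{2}\int \widehat{\rho}_t^{-1}(d\widehat{\rho}_t)^2\,dx$: since $(dI)^2 = dt$ it equals $\frac{1}{2}\int \widehat{\rho}_t\,|h(x) - \pi_t(h)|^2\,dx\,dt = \frac{1}{2}\big(\pi_t(|h|^2) - |\pi_t(h)|^2\big)\,dt$, which is exactly $\frac{1}{2}\,\mathbb{E}[\,|\varepsilon_t(h)|^2 \mid \mathcal{Y}_0^t]\,dt$.

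Taking expectations, and using the tower property once more to turn $\widehat{\rho}_t$-averages into averages at $x = X(t)$ --- so that $\mathbb{E}[\int \widehat{\rho}_t(\nabla . u)\,dx] = \mathbb{E}[(\nabla . u)|_{X(t)}]$, $\mathbb{E}[\int \widehat{\rho}_t\,\Gamma_{\mathcal{L}}(\ln \widehat{\rho}_t, \ln \widehat{\rho}_t)\,dx] = \text{tr}\,\mathsf{J}_t^{\pi}$ by the definition of the a-posteriori Fisher information matrix, and $\mathbb{E}[\pi_t(|h|^2) - |\pi_t(h)|^2] = \mathbb{E}[|\varepsilon_t(h)|^2]$ --- one obtains
\begin{eqnarray*}
\frac{d}{dt}\,\mathbb{E}\Big[\int \widehat{\rho}_t \ln \widehat{\rho}_t\,dx\Big] &=& -\,\mathbb{E}\big[(\nabla . u)|_{X(t)}\big] - \frac{1}{2}\,\text{tr}\,\mathsf{J}_t^{\pi} + \frac{1}{2}\,\mathbb{E}\big[|\varepsilon_t(h)|^2\big].
\end{eqnarray*}
Adding $\frac{d}{dt}\mathsf{H}_t = \mathbb{E}[(\nabla . u)|_{X(t)}] + \frac{1}{2}\,\text{tr}\,\mathsf{J}_t^{\rho}$ from Theorem~\ref{thm:dot_H} cancels the divergence terms and leaves precisely (\ref{eq:MayerWolf_Zakai}).

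The hard part is not the algebra but the analysis underpinning these formal steps: justifying the interchanges of $\frac{d}{dt}$, $\int dx$ and $\mathbb{E}$, the vanishing of the spatial boundary terms in the integrations by parts (of exactly the same character as those tacitly used in Theorem~\ref{thm:dot_H}), and the integrability that makes the $dI$-stochastic integral a true martingale rather than merely a local one; in a rigorous account one imposes growth and nondegeneracy hypotheses on $v$, $B$, $h$ (and regularity of the initial law) sufficient to secure these. It is worth noting an equivalent route that sidesteps the nonlinearity of (\ref{eq:KS_density}): start instead from (\ref{eq:MWZ2}) and apply the generalized It\^{o} formula to $\ln \zeta_t(X(t))$ for the linear Zakai density $\zeta_t$; the cross-variation between the random field $\zeta_t$ (driven by $dU$) and the evaluation point $X(t)$ (driven by $dW$) drops out because $W$ and $U$ are independent, and $\Gamma_{\mathcal{L}}(\ln \zeta_t, \ln \zeta_t) = \Gamma_{\mathcal{L}}(\ln \widehat{\rho}_t, \ln \widehat{\rho}_t)$ since $\sigma_t(1)$ carries no $x$-dependence, so this computation reproduces the same identity.
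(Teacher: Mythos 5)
Your proof is correct, but it takes a genuinely different route from the paper's. The paper's argument (Section~\ref{app:B}, given there in the controlled setting) starts from the representation (\ref{eq:MWZ2}): it applies the generalized It\^{o} formula to $\ln \zeta_t(X(t))$ and to $\ln \rho_t(X(t))$ --- the unnormalized Zakai density and the Fokker--Planck density evaluated \emph{along the state trajectory}, with the cross-variation killed by $dX(t)\,dY(t)^\top = 0$ --- and then cancels two pairs of terms by pulling the expectation inside via $\mathbb{E}[\widehat{\rho}_t] = \rho_t$ (the same device as Lemma~\ref{lem:Ddot}), the normalization $\sigma_t(1)$ being handled by (\ref{q:sigma_1_average}). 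You instead integrate out $x$ first: you identify (\ref{eq:MWZ1}) with $\mathsf{H}_t - \mathsf{H}(X(t)\mid Y_0^t)$, differentiate the averaged conditional entropy by applying It\^{o} to $\widehat{\rho}_t \ln \widehat{\rho}_t$ through the nonlinear Kushner--Stratonovich density equation (\ref{eq:KS_density}), and reuse Theorem~\ref{thm:dot_H} for the unconditional part. Your route makes the structure of the result transparent: the supply term $\frac{1}{2}\mathbb{E}[|\varepsilon_t(h)|^2]$ emerges as the It\^{o} correction of the innovations integral, the $\nabla . u$ contributions cancel between conditional and unconditional entropy rates, and the two Fisher informations appear as conditional and unconditional versions of the same de Bruijn-type computation; it also avoids evaluating a random field at the diffusing point $X(t)$. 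What the paper's route buys is linearity --- the Zakai equation for $\zeta_t$ carries no $\pi_t(h)$ nonlinearity and no interchange of the It\^{o} formula with $\int dx$ is needed --- at the price of the pathwise composition $\ln \zeta_t(X(t))$ and explicit bookkeeping of $\mathbb{E}[\ln \sigma_t(1)]$. Your closing remark (repeating the computation on $\ln \zeta_t(X(t))$, the cross-variation vanishing by independence of $W$ and $U$, and $\Gamma_{\mathcal{L}}(\ln\zeta_t,\ln\zeta_t) = \Gamma_{\mathcal{L}}(\ln\widehat{\rho}_t,\ln\widehat{\rho}_t)$) is in fact precisely the paper's argument, so the two approaches reconcile; the analytic caveats you flag (interchanges of derivative, integral and expectation, vanishing boundary terms, true versus local martingales) are exactly the hypotheses the paper leaves tacit.
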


\subsection{Information Flow}

We remark that the equation (\ref{eq:MayerWolf_Zakai}) from the Mayer-Wolf
Zakai Theorem may be rewritten as 
\begin{eqnarray}
\frac{d}{dt}\mathsf{I}\left( X\left( t\right) ;Y_{0}^{t}\right) = \frac{d}{dt%
}\mathsf{I}\left( X_0^t ;Y_{0}^{t}\right) -\frac{1}{2} \text{tr} \, \left( 
\mathsf{J}_{t}^{\pi }-\mathsf{J}_{t}^{\rho}\right) .
\end{eqnarray}
where we use Duncan's Theorem.

From this, we see that the appropriate definitions generalizing Mitter and Newton \cite{MN05} are the following.

\begin{definition}
The \textbf{information supplied to the filter's memory storage} up to
time $t$ is defined to be
\begin{eqnarray}
\mathsf{S} (t) \triangleq \mathsf{I}\left( X_0^t;Y_{0}^{t}\right) +\mathsf{I}%
\left( X(0) ;Y(0) \right)
\end{eqnarray}
and the \textbf{information dissipated by the filter's memory storage} up to
time $t$ is 
\begin{eqnarray}
\mathsf{D} (t) \triangleq \frac{1}{2} \int_0^t \text{tr} \, \left( \mathsf{J}%
_{\tau}^{\pi }-\mathsf{J}_{\tau }^{\rho}\right) d \tau .
\label{eq:Dissip_Info_filter}
\end{eqnarray}
\end{definition}

We then have the decomposition
\begin{eqnarray}
\frac{d}{dt}\mathsf{I}\left( X\left( t\right) ;Y_{0}^{t}\right) = \dot{%
\mathsf{S}} (t)- \dot{\mathsf{D} } (t) .
\end{eqnarray}

\begin{example}
\label{example:KB_rates}
It is of interest to recover now the results of Mitter and Newton \cite{MN05} for the Kalman-Bucy filter. Evidently,
\begin{eqnarray}
\frac{d}{dt}\mathsf{S}_{\text{K-B}} (t) = \frac{1}{2} \text{tr} \, \big\{ C \widehat{V} (t) C^\top \big\}
\end{eqnarray}
from (\ref{eq:I_KB}). The unconditioned probability density is
$\rho_t$ is Gaussian of mean $\mu_t$ and variance $V(t)$, see (\ref{eq:mu_t}) an (\ref{eq:rate_V}), so that,
\begin{eqnarray*}
\nabla \rho_t (x) \equiv - V(t)^{-1}  \big( x - \mu_t \big)
\end{eqnarray*}
and therefore
\begin{eqnarray}
\text{tr} \, \mathsf{J}_{\tau }^{\rho} &\equiv& \mathbb{E} \left[ (\nabla \rho_t )^\top \Sigma  (\nabla \rho_t ) \right] 
\nonumber \\
&=& \mathbb{E} \left[ (x - \mu_t )^\top V(t)^{-1} \Sigma  V(t)^{-1} (x - \mu_t  ) \right] \nonumber \\
&=& \text{tr} \big\{ \Sigma V(t)^{-1} \big\} .
\end{eqnarray}
Similarly,  the conditioned probability density,
$\widehat{\rho}_t ( \cdot ,  \omega )$, is Gaussian of mean $\widehat{X}_t (\omega )$ and variance $\widehat{V}(t)$, so that,
\begin{eqnarray*}
\nabla \widehat{\rho}_t (x, \omega ) \equiv - \widehat{V}(t)^{-1}  \big( x - \widehat{X}_t(\omega )  \big)
\end{eqnarray*}
and
\begin{eqnarray}
\text{tr} \, \mathsf{J}_{\tau }^{\pi} &\equiv& \mathbb{E} \left[ (\nabla \widehat{\rho} _t )^\top \Sigma  (\nabla \widehat{\rho} _t ) \right] 
\nonumber \\
&=& \mathbb{E} \left[ (\widehat{X}_t )^\top \widehat{V}(t)^{-1} \Sigma  \widehat{V}(t)^{-1} (\widehat{X}_t) \right] \nonumber \\
&=& \text{tr} \big\{ \Sigma \widehat{V}(t)^{-1} \big\} .
\end{eqnarray}

Therefore, for the Kalman-Bucy filter, the dissipated information is
\begin{eqnarray}
 \mathsf{D}_{\text{K-B}} (t) = \frac{1}{2} \int_0^t \text{tr} \big\{ \Sigma \big( \widehat{V}(t)^{-1} -V(t)^{-1} \big) \big\}.
\end{eqnarray}
\end{example}

\bigskip

Let us now recall Lemma \ref{thm:dot_H} on the entropy production of the
unobserved process $X(t)$, that is, where we calculated the rate of change
of $\mathsf{H}_t = \mathsf{H} (X(t))$ to be 
\begin{eqnarray}
\frac{d}{dt}\mathsf{H}\left( X\left( t\right) \right) &=&\frac{1}{2}\int
\rho_{t}\left( \nabla \ln \rho_{t}\right) ^{\top }\Sigma \left( \nabla \rho
_{t}\right) +\int \rho_{t}\left( \nabla .u\right)  \notag \\
&\equiv & \frac{1}{2} \text{tr\thinspace }\mathsf{J}_{t}^{\rho}+\mathbb{E}%
\left[ \left. \left( \nabla .u\right) \right| _{x=X\left( t\right) }\right] ,
\label{eq:H_rate_1}
\end{eqnarray}
see (\ref{eq:lem_id}) where we recognize the first term as the unconditioned
Fisher information. We now give the rate of change for the entropy
conditioned on the available measurements.

\begin{proposition}
The conditional information for the signal, $X\left( t\right) $, given the
measurements up to time $t$, $Y_{0}^{t}$, has the rate of change 
\begin{eqnarray}
\frac{d}{dt}\mathsf{H}\left( X\left( t\right) |Y_{0}^{t}\right) =
\frac{1}{2}\mathrm{tr }\, \mathsf{J}_{t}^{\pi }+\mathbb{E}\left[ \left. \left( \nabla
.u\right) \right| _{x=X\left( t\right) }\right] 
-\frac{1}{2}%
\mathbb{E}\left[ |\varepsilon_t (h) |^{2}\right].  
\label{eq:rate_I_cond}
\end{eqnarray}
\end{proposition}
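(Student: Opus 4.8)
The quickest route is to differentiate the identity $\mathsf{H}(X(t)|Y_0^t) = \mathsf{H}(X(t)) - \mathsf{I}(X(t);Y_0^t)$, which is (\ref{eq:mutual_cond}) with $Y=Y_0^t$. Both summands on the right have already been computed: the unconditioned entropy rate is (\ref{eq:H_rate_1}), namely $\tfrac{1}{2}\,\mathrm{tr}\,\mathsf{J}_t^{\rho} + \mathbb{E}[(\nabla.u)|_{x=X(t)}]$, and the mutual-information rate is the Mayer-Wolf--Zakai expression (\ref{eq:MayerWolf_Zakai}). Subtracting, the two copies of $\mathrm{tr}\,\mathsf{J}_t^{\rho}$ cancel and one is left with exactly (\ref{eq:rate_I_cond}). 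This proves the proposition, but it pushes all the analytic content into the Mayer-Wolf--Zakai step, so I would also record a direct derivation, both for self-containedness and as a consistency check.

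For the direct argument, write $\mathsf{H}(X(t)|Y_0^t) = \mathbb{E}[ -\int \widehat{\rho}_t(x,\omega)\ln\widehat{\rho}_t(x,\omega)\,dx ]$, the outer expectation being over the observation paths. Assuming $\widehat{\rho}_t$ strictly positive, I would apply the \ito formula to $\widehat{\rho}_t\ln\widehat{\rho}_t$ using the Kushner--Stratonovich density equation (\ref{eq:KS_density}); since $I$ is a $\mathcal{Y}$-Wiener process we have $(d\widehat{\rho}_t)^2 = \widehat{\rho}_t^2\,|h-\pi_t(h)|^2\,dt$ with $\pi_t(h)=\int h(x')\widehat{\rho}_t(x',\omega)\,dx'$, so that
\begin{eqnarray*}
d( \widehat{\rho}_t\ln\widehat{\rho}_t) &=& ( 1+\ln\widehat{\rho}_t) \,\mathcal{L}^{\star}\widehat{\rho}_t\,dt + \tfrac{1}{2}\,\widehat{\rho}_t\,|h-\pi_t(h)|^2\,dt \\
&& {}+( 1+\ln\widehat{\rho}_t) \,\widehat{\rho}_t\,( h-\pi_t(h))^{\top}\,dI(t) .
\end{eqnarray*}
Integrating over $x$ and taking expectations, the stochastic term drops out (a $\mathcal{Y}_0^t$-adapted integrand against $dI$), leaving $\frac{d}{dt}\mathsf{H}(X(t)|Y_0^t) = -\mathbb{E}[ \int ( 1+\ln\widehat{\rho}_t) \mathcal{L}^{\star}\widehat{\rho}_t\,dx] - \tfrac{1}{2}\mathbb{E}[ \int\widehat{\rho}_t\,|h-\pi_t(h)|^2\,dx]$.

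The two remaining terms are then handled as in the proofs of Theorem~\ref{thm:dot_H} and Lemma~\ref{eq:L_logf}. In the first, $\mathcal{L}1=0$ lets one replace $1+\ln\widehat{\rho}_t$ by $\ln\widehat{\rho}_t$, duality gives $-\int(\mathcal{L}\ln\widehat{\rho}_t)\widehat{\rho}_t\,dx$, and Lemma~\ref{eq:L_logf} splits this into $-\int\mathcal{L}\widehat{\rho}_t\,dx$ and $\tfrac{1}{2}\int\widehat{\rho}_t\,\Gamma_{\mathcal{L}}(\ln\widehat{\rho}_t,\ln\widehat{\rho}_t)\,dx$; the former equals $\int\widehat{\rho}_t(\nabla.u)\,dx$ since $\mathcal{L}^{\star}1=-(\nabla.u)$, and the latter, using $\Gamma_{\mathcal{L}}(\ln\widehat{\rho}_t,\ln\widehat{\rho}_t)=(\nabla\ln\widehat{\rho}_t)^{\top}\Sigma(\nabla\ln\widehat{\rho}_t)$ together with the tower identity $\mathbb{E}[g(X(t),\omega)]=\mathbb{E}[\int g(x,\omega)\widehat{\rho}_t(x,\omega)\,dx]$, is precisely $\tfrac{1}{2}\,\mathrm{tr}\,\mathsf{J}_t^{\pi}$. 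The same tower identity turns the second term into $\tfrac{1}{2}\mathbb{E}[|h(X(t))-\pi_t(h)|^2]=\tfrac{1}{2}\mathbb{E}[|\varepsilon_t(h)|^2]$, and assembling the pieces gives (\ref{eq:rate_I_cond}).

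The step I expect to be the real obstacle is the one hidden in both routes: justifying the infinite-dimensional \ito expansion of the measure-valued process $\widehat{\rho}_t$, the interchanges of $\frac{d}{dt}$, $\mathbb{E}$ and $\int dx$, the vanishing of the martingale contribution, and the integrations by parts with no boundary terms. These all rely on standing integrability and decay hypotheses on $\widehat{\rho}_t$ and $\nabla\ln\widehat{\rho}_t$ (strict positivity, and enough regularity for $\mathsf{J}_t^{\pi}$ to be finite and $t$-differentiable), which is exactly why presenting the result via the subtraction (\ref{eq:mutual_cond}) is the economical choice.
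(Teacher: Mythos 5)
Your first paragraph is precisely the paper's own proof: differentiate $\mathsf{H}(X(t)|Y_0^t)=\mathsf{H}(X(t))-\mathsf{I}(X(t);Y_0^t)$ and subtract the Mayer-Wolf--Zakai rate (\ref{eq:MayerWolf_Zakai}) from the entropy-production rate (\ref{eq:H_rate_1}), the $\mathrm{tr}\,\mathsf{J}_t^{\rho}$ terms cancelling. The supplementary direct derivation via the Kushner--Stratonovich density equation is a correct and worthwhile consistency check (modulo the regularity caveats you note), but the essential argument coincides with the paper's.
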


\begin{proof}
From equation (\ref{eq:mutual_cond}) we have
\begin{eqnarray*}
\frac{d}{dt}\mathsf{H}\left( X\left( t\right) |Y_{0}^{t}\right) =\frac{d}{dt}%
\mathsf{H}\left( X\left( t\right) \right) -\frac{d}{dt}\mathsf{I}\left(
X\left( t\right) ;Y_{0}^{t}\right) .
\end{eqnarray*}
The result then follows immediately from combining (\ref{eq:H_rate_1}) with (\ref{eq:MayerWolf_Zakai}) from the
Mayer-Wolf Zakai Theorem.
\end{proof}

\subsection{Rate of Information Dissipated by the Filter’s Memory Storage}

We now give an alternative form of the rate of information dissipated.
First, we need the following Lemma.

\begin{lemma}
\label{lem:Ddot}
Let $F:\Gamma \times \mathscr{C}\left[ 0,t\right] \mapsto \mathbb{R}$ be
jointly measurable, then
\begin{eqnarray*}
\mathbb{E}\left[ \frac{1}{\widehat{\rho}_{t}\left( x\right) }F\left(
x,Y_{0}^{t}\right) \right] =\int_{\Gamma }\mathbb{E}\left[ F\left(
x,Y_{0}^{t}\right) \right] dx.
\end{eqnarray*}
\end{lemma}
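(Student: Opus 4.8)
The plan is to recognise the identity as a disintegration of the joint law of the pair $(X(t),Y_0^t)$ against the kernel furnished by the conditioned density, followed by a one-line cancellation and Fubini--Tonelli. Throughout, $\Gamma$ denotes the state space $\mathbb{R}^n$ equipped with Lebesgue measure $dx$, and $\mathbb{P}_Y$ denotes the law on $\mathscr{C}[0,t]$ of the observation path $Y_0^t$; by the Doob--Dynkin lemma the $\mathcal{Y}_0^t$-measurable random density $\widehat{\rho}_t(x,\cdot)$ is a measurable function of $Y_0^t$, so $\widehat{\rho}_t(x,\omega)$ may be regarded as defined on $\Gamma\times\mathscr{C}[0,t]$. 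The key fact, which is merely a restatement of the defining property $\int f(x)\widehat{\rho}_t(x,\omega)\,dx=\mathbb{E}[f(X(t))\mid\mathcal{Y}_0^t]$ of the conditioned density, is that
\begin{eqnarray*}
\mathbb{E}\big[G(X(t),Y_0^t)\big]=\int_{\mathscr{C}[0,t]}\int_{\Gamma}G(x,\omega)\,\widehat{\rho}_t(x,\omega)\,dx\,\mathbb{P}_Y(d\omega)
\end{eqnarray*}
for every nonnegative jointly measurable $G$. I would verify this first for product functions $G(x,\omega)=f(x)g(\omega)$ by conditioning on $\mathcal{Y}_0^t$, then extend it to all jointly measurable $G\ge 0$ by the functional monotone class theorem.

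First I would substitute $G(x,\omega)=F(x,\omega)/\widehat{\rho}_t(x,\omega)$ into the displayed identity, taking $F\ge 0$ so that Tonelli applies unconditionally. Interpreting the symbol ``$x$'' on the left-hand side of the Lemma as $X(t)$, exactly as in (\ref{eq:MWZ1}), the weight $\widehat{\rho}_t$ cancels on the set where it is positive:
\begin{eqnarray*}
\mathbb{E}\left[\frac{F(X(t),Y_0^t)}{\widehat{\rho}_t(X(t))}\right]=\int_{\mathscr{C}[0,t]}\int_{\Gamma}F(x,\omega)\,\mathbf{1}_{\{\widehat{\rho}_t(x,\omega)>0\}}\,dx\,\mathbb{P}_Y(d\omega).
\end{eqnarray*}
Under the standing hypothesis that the densities are everywhere strictly positive, $\widehat{\rho}_t(x,\omega)>0$ for $(dx\otimes\mathbb{P}_Y)$-almost every $(x,\omega)$, so the indicator may be dropped; a final application of Tonelli's theorem interchanges the integrations and gives $\int_{\Gamma}\big(\int_{\mathscr{C}[0,t]}F(x,\omega)\,\mathbb{P}_Y(d\omega)\big)\,dx=\int_{\Gamma}\mathbb{E}[F(x,Y_0^t)]\,dx$, where $\mathbb{E}[F(x,Y_0^t)]$ is the expectation over the observation path with $x$ held fixed --- this is the assertion. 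The case of signed $F$ follows by applying the above to $F^{+}$ and $F^{-}$ under the integrability hypothesis $\int_{\Gamma}\mathbb{E}\,|F(x,Y_0^t)|\,dx<\infty$ that holds in every application of the Lemma.

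The substance of the argument is thus entirely in the disintegration formula, and the hard part, such as it is, is bookkeeping rather than a genuine obstacle. The three points requiring care are: (i) a jointly measurable version of $(x,\omega)\mapsto\widehat{\rho}_t(x,\omega)$, which exists because $\widehat{\rho}_t$ solves the Kushner--Stratonovich equation (\ref{eq:KS_density}) --- equivalently $\zeta_t$ solves the linear Zakai equation (\ref{eq:Zakai_z}) --- driven by measurable data; (ii) the existence of a regular conditional distribution possessing a Lebesgue density, which is guaranteed since $\Gamma=\mathbb{R}^n$ is Polish and $\widehat{\rho}_t(\cdot,\omega)\,dx$ is by construction exactly that conditional law; and (iii) the almost-everywhere positivity of $\widehat{\rho}_t$ and the integrability of $F$, both covered by the blanket assumptions of the paper and by the specific form of the functionals (squared logarithmic derivatives of densities) to which the Lemma will be applied. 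If one prefers to avoid regular conditional distributions altogether, the same identity can be read off from the Kallianpur--Striebel representation of the Zakai filter under a reference measure, using the exponential weight of (\ref{eq:sigma_1}); but the disintegration route is the most economical.
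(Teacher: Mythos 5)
Your proof is correct and follows essentially the same route as the paper: you rewrite the expectation against the joint law of $(X(t),Y_0^t)$, identify $\widehat{\rho}_t(x,\cdot)$ as the conditional density so that the joint law disintegrates as $\widehat{\rho}_t(x,\omega)\,dx\,\mathbb{P}_{Y_0^t}(d\omega)$, cancel the density, and finish with Fubini--Tonelli. The extra bookkeeping you supply (joint measurability, a.e.\ positivity, monotone class extension) is a harmless elaboration of the paper's three-line argument.
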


\begin{proof}
We have
\begin{eqnarray*}
\mathbb{E}\left[ \frac{1}{\widehat{\rho}_{t}\left( x\right) }F\left(
x,Y_{0}^{t}\right) \right]  &=&\int \frac{1}{\rho _{X\left( t\right)
|Y_{0}^{t}}\left( x|y_{0}^{t}\right) }F\left( x,y_{0}^{t}\right) 
\mathbb{P}
_{X\left( t\right) ,Y_{0}^{t}}\left( dx,dy_{0}^{t}\right)   \\
&=&\int F\left( x,y_{0}^{t}\right) \mathbb{P} _{Y_{0}^{t}}\left( dy_{0}^{t}\right)
\,dx   \\
&=&\int \mathbb{E}\left[ F\left( x,Y_{0}^{t}\right) \right] dx.
\end{eqnarray*}
\end{proof}

\bigskip

\begin{theorem}
The rate of change of the dissipation of information stored in the filter memory takes the
form
\begin{eqnarray}
\dot{\mathsf{D}} \left( t\right) =\frac{1}{2}\mathbb{E}\left[ \left. \left(
\nabla \ln \frac{\widehat{\rho}_{t}}{\rho _{t}}\right) ^{\top }\Sigma \left(
\nabla \ln \frac{\widehat{\rho}_{t}}{\rho _{t}}\right) \right| _{x=X\left(
t\right) }\right] .  \label{eq:dotD}
\end{eqnarray}
\end{theorem}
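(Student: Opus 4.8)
The plan is to unwind the definition \eqref{eq:Dissip_Info_filter}, which gives $\dot{\mathsf{D}}(t)=\tfrac12\,\text{tr}\,(\mathsf{J}^{\pi}_{t}-\mathsf{J}^{\rho}_{t})$, and to match this against the claimed expression \eqref{eq:dotD}. Since $\nabla\ln(\widehat{\rho}_{t}/\rho_{t})=\nabla\ln\widehat{\rho}_{t}-\nabla\ln\rho_{t}$ and $\Sigma$ is symmetric, expanding the quadratic form in \eqref{eq:dotD} shows that its right-hand side equals $\tfrac12\big(\text{tr}\,\mathsf{J}^{\pi}_{t}+\text{tr}\,\mathsf{J}^{\rho}_{t}-2C\big)$, where $C\triangleq\mathbb{E}\big[(\nabla\ln\widehat{\rho}_{t})^{\top}\Sigma(\nabla\ln\rho_{t})\big|_{x=X(t)}\big]$ is the cross term. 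Thus the whole statement reduces to the single identity $C=\text{tr}\,\mathsf{J}^{\rho}_{t}$: granting that, the right-hand side of \eqref{eq:dotD} collapses to $\tfrac12(\text{tr}\,\mathsf{J}^{\pi}_{t}-\text{tr}\,\mathsf{J}^{\rho}_{t})=\dot{\mathsf{D}}(t)$.

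To prove $C=\text{tr}\,\mathsf{J}^{\rho}_{t}$ I would invoke Lemma \ref{lem:Ddot}. Writing $\nabla\ln\widehat{\rho}_{t}=\widehat{\rho}_{t}^{-1}\nabla\widehat{\rho}_{t}$, the cross term becomes $C=\mathbb{E}\big[\widehat{\rho}_{t}(X(t))^{-1}\,(\nabla\widehat{\rho}_{t})^{\top}\Sigma(\nabla\ln\rho_{t})\big|_{x=X(t)}\big]$, which is precisely of the form handled by Lemma \ref{lem:Ddot} with $F(x,Y_{0}^{t})=(\nabla\widehat{\rho}_{t}(x))^{\top}\Sigma(x)\,\nabla\ln\rho_{t}(x)$. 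The lemma then gives $C=\int\mathbb{E}\big[(\nabla\widehat{\rho}_{t}(x))^{\top}\Sigma(x)\,\nabla\ln\rho_{t}(x)\big]\,dx$. Now $\Sigma(x)$ and $\rho_{t}(x)$ are deterministic, so I may take the expectation past the gradient and use $\mathbb{E}[\widehat{\rho}_{t}]\equiv\rho_{t}$ to conclude $\mathbb{E}[\nabla\widehat{\rho}_{t}(x)]=\nabla\rho_{t}(x)$; hence $C=\int(\nabla\rho_{t})^{\top}\Sigma\,\nabla\ln\rho_{t}\,dx=\int\rho_{t}\,(\nabla\ln\rho_{t})^{\top}\Sigma(\nabla\ln\rho_{t})\,dx=\text{tr}\,\mathsf{J}^{\rho}_{t}$, the last step being just the definition of $\mathsf{J}^{\rho}_{t}$ together with the fact that $X(t)$ has density $\rho_{t}$. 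This closes the argument.

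The purely formal steps — that $\mathsf{J}^{\pi}_{t}$ can equally be written with Zakai's $\zeta_{t}$ in place of $\widehat{\rho}_{t}$ since only logarithmic derivatives enter, and the elementary expansion of the quadratic form — I would merely record. The one delicate point, and the step I expect to be the main obstacle, is the analytic justification of the two interchanges: the application of Lemma \ref{lem:Ddot} (joint measurability and integrability of $F$) and the exchange of $\mathbb{E}$ with $\nabla_{x}$. I would dispatch this under the regularity already assumed for the Mayer-Wolf/Zakai Theorem: $\Sigma$ everywhere invertible, $\rho_{t}$ and $\widehat{\rho}_{t}$ strictly positive and smooth, with enough integrability that the Fisher-information integrals converge.
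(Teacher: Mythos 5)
Your proposal is correct and follows essentially the same route as the paper's proof: expand the quadratic form in $\nabla\ln(\widehat{\rho}_{t}/\rho_{t})$, reduce everything to showing the cross term equals $\mathrm{tr}\,\mathsf{J}^{\rho}_{t}$, and establish that identity via Lemma \ref{lem:Ddot} together with $\mathbb{E}[\widehat{\rho}_{t}]\equiv\rho_{t}$. Your closing remarks on the regularity needed to justify the interchanges are a reasonable addition but do not change the argument.
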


\begin{proof}
Writing $\nabla \ln \frac{\widehat{\rho}_{t}}{\rho _{t}}=\nabla \ln \widehat{\rho}%
_{t}-\nabla \ln \rho _{t}$ and expanding leads to the following expression
for the right hand side of (\ref{eq:dotD}):
\begin{eqnarray*}
\frac{1}{2}\text{tr}\mathsf{J}_{t}^{\pi }+\frac{1}{2}\text{tr}\mathsf{J}%
_{t}^{\rho }-\mathbb{E}\left[ \left. \left( \nabla \ln \widehat{\rho}_{t}\right)
^{\top }\Sigma \left( \nabla \ln \rho _{t}\right) \right| _{x=X\left(
t\right) }\right] 
\end{eqnarray*}
however, the cross-terms may be written as
\begin{eqnarray*}
-\mathbb{E}\left[ \frac{1}{\widehat{\rho}_{t}}\left. \left( \nabla \widehat{\rho}%
_{t}\right) ^{\top }\Sigma \left( \nabla \ln \rho _{t}\right) \right|
_{x=X\left( t\right) }\right] 
\end{eqnarray*}
and, by Lemma \ref{lem:Ddot}, this equals to 
\begin{eqnarray*}
&-&\int \mathbb{E}\left[ \left( \nabla \widehat{\rho}_{t}\left( x\right) \right)
^{\top }\Sigma \left( x\right) \left( \nabla \ln \rho _{t}\left( x\right)
\right) \right] dx \nonumber\\
&=&-\int \mathbb{E}\left[ \left( \nabla \widehat{\rho}%
_{t}\left( x\right) \right) ^{\top }\right] \Sigma \left( x\right) \left(
\nabla \ln \rho _{t}\left( x\right) \right) dx \\
&=&-\int \left( \nabla \rho _{t}\left( x\right) \right) ^{\top }\Sigma
\left( x\right) \left( \nabla \ln \rho _{t}\left( x\right) \right) dx \\
&=&-\text{tr}\mathsf{J}_{t}^{\rho }.
\end{eqnarray*}
Therefore the right hand side of (\ref{eq:dotD}) simplifies to $\frac{1}{2}$%
tr$\mathsf{J}_{t}^{\pi }-\frac{1}{2}$tr$\mathsf{J}_{t}^{\rho }$ which is the
form of $\mathsf{\dot{D}}\left( t\right) $ given in (\ref{eq:Dissip_Info_filter}).
\end{proof}

\begin{corollary}
In the decomposition $\frac{d}{dt}\mathsf{I}\left( X\left( t\right) ;Y_{0}^{t}\right) =
\dot{ \mathsf{S}} (t)- \dot{ \mathsf{D} } (t)$, we have both $\dot{ \mathsf{S} } (t) \geq 0$ and $\dot{ \mathsf{D} } (t)\geq 0$.
\end{corollary}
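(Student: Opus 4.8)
The plan is to treat the two non-negativity assertions separately; each is an immediate consequence of a result already established above.

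For $\dot{\mathsf{S}}(t)\ge 0$: since $\mathsf{I}(X(0);Y(0))$ is constant in $t$, we have $\dot{\mathsf{S}}(t)=\frac{d}{dt}\mathsf{I}(X_0^t;Y_0^t)$. Taking the observed signal to be $H(t)=h(X(t))$ and using the innovations representation $dY(t)=\pi_t(h)\,dt+dI(t)$, which identifies $\widehat H(t)=\mathbb{D}_{\mathcal Y}Y(t)=\pi_t(h)$, one gets $\varepsilon(H,t)=h(X(t))-\pi_t(h)=\varepsilon_t(h)$. Duncan's Theorem then gives $\mathsf{I}(X_0^t;Y_0^t)=\frac{1}{2}\int_0^t\mathbb{E}\big[|\varepsilon_\tau(h)|^2\big]\,d\tau$, so differentiating in $t$ yields $\dot{\mathsf{S}}(t)=\frac{1}{2}\mathbb{E}\big[|\varepsilon_t(h)|^2\big]\ge 0$, the integrand being the mean of a non-negative quantity.

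For $\dot{\mathsf{D}}(t)\ge 0$: I would simply invoke the preceding Theorem, which rewrites the dissipation rate as $\dot{\mathsf{D}}(t)=\frac{1}{2}\mathbb{E}\big[(\nabla\ln(\widehat{\rho}_t/\rho_t))^\top\Sigma(\nabla\ln(\widehat{\rho}_t/\rho_t))\big|_{x=X(t)}\big]$. Because the diffusion tensor $\Sigma$ is precisely the co-metric tensor of $\mathcal L$, and a Markov-diffusion $\Gamma$-operator satisfies $\Gamma_{\mathcal L}(f,f)=(\nabla f)^\top\Sigma(\nabla f)\ge 0$, the matrix $\Sigma(x)$ is positive semi-definite everywhere. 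Hence the quadratic form inside the expectation is pointwise non-negative, and so is its expectation, giving $\dot{\mathsf{D}}(t)\ge 0$.

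The computation is essentially trivial once these two representations are in hand; the only point needing a moment's care — the nearest thing to an obstacle — is the identification $\widehat H(t)=\pi_t(h)$, i.e. that the least-squares estimate of the observation drift coincides with the $\mathcal Y$-conditional velocity of $Y$, which is exactly what the innovations construction underlying the Kushner–Stratonovich equation supplies.
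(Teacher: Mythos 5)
Your proposal is correct and follows the same route as the paper: $\dot{\mathsf{S}}(t)\geq 0$ from Duncan's Theorem with $\varepsilon(H,t)=\varepsilon_t(h)$, and $\dot{\mathsf{D}}(t)\geq 0$ from the quadratic-form representation (\ref{eq:dotD}) together with the positive semi-definiteness of the co-metric $\Sigma$. The extra detail you supply on identifying $\widehat{H}(t)=\pi_t(h)$ is a fair elaboration of what the paper leaves implicit, but it does not change the argument.
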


This follows from Duncan's Theorem, which says that $\dot{\mathsf{S} }(t)
\equiv \frac{1}{2} \mathbb{E} [ | \varepsilon_t (h ) |^2 ]$, and from the
specific form obtained in equation (\ref{eq:dotD}).

\bigskip

We note that the rate of information dissipation may be expressed in terms
of an average of the $\Gamma $-operator: 
\begin{eqnarray}
\dot{\mathsf{D}}_{t}\equiv \frac{1}{2}\mathbb{E}\left[ \left. \Gamma _{%
\mathcal{L}}\left( \ln \frac{\widehat{\rho}_{t}}{\rho _{t}},\ln \frac{\widehat{\rho}%
_{t}}{\rho _{t}}\right) \right| _{X\left( t\right) }\right] .
\label{eq:rate3}
\end{eqnarray}

\section{Feedback}

The Mayer-Wolf and Zakai Theorem may also be generalized in several directions \cite{MWZ}. 
In particular, in place of (\ref{eq:filter_obs}),
we can take the observations to satisfy 
\begin{eqnarray}
dY (t) = h\left( X(t), Y(t)\right)
dt+dU (t) ,
\label{eq:controlled_h}
\end{eqnarray}
without otherwise changed the form of the equation (\ref{eq:MayerWolf_Zakai}), or the ensuing analysis.
The dynamical and observational noises may also be taken to be correlated, but this results is a slight modification to 
(\ref{eq:MayerWolf_Zakai}).

If we wish to consider Maxwell D\ae mon type problems, it is more natural however to consider feedback of the system. Specifically, we consider the controlled flow
\begin{eqnarray}
dX (t) = v\left( X(t), \beta (t)\right) dt + B(X(t)) \, dW(t) ,
\label{eq:control_dynamics}
\end{eqnarray}
which is the same as (\ref{eq:filter_dynamics}) except that the velocity depends on a control process $\{ \beta (t) :
t \geq 0 \}$ which we take to be adapted to the filtration $\mathcal{Y}$ of the observations. 
The flow is then described by
\begin{eqnarray}
df(X(t)) = \mathcal{L}_{\beta } f (x) \vert_{x=X(t), \beta = \beta (t)} dt + B(f)\vert_{x=X(t)} dW(t),
\end{eqnarray}
where the controlled generator is
\begin{eqnarray}
\mathcal{L}_{\beta } f (x)
= v^i (x, \beta ) f_{,i}(x) + \frac{1}{2} \Sigma^{ij} f_{,ij} (x) .
\end{eqnarray}
We remark that \begin{eqnarray}
\pi_t \big( \mathcal{L}_{\beta (t)} f \big) \equiv \pi_t \big( \mathcal{L}_{\beta } f \big) \vert_{\beta = \beta (t)},
\label{eq:pigtails}
\end{eqnarray}
since the control, $\beta (t)$, is adapted to the observation's filtration. (We derive the corresponding 
Kushner-Stratonovich equation  in Section \ref{app:A}.)
We note that the Fokker-Planck equation takes the same form as before, (\ref{eq:Fokker_Planck}), but where the generator now has the time-dependent drift velocity
\begin{eqnarray}
\overline{v}^i (x, t ) = \mathbb{E} \big[ v^i \big( x , \beta (t)  \big) \big] .
\end{eqnarray}
With these obvious replacements, the Mayer-Wolf Zakai Theorem remains unchanged for the controlled flow (\ref{eq:control_dynamics}),
with $\mathcal{Y}$-adapted control process $\{ \beta (t) : t \ge 0 \}$, and with controlled observations (\ref{eq:controlled_h}).
For completeness, we give a derivation of this for (\ref{eq:control_dynamics}) with the basic uncontrolled observations
in Section \ref{app:B}.

The equations for entropy production and information flow are therefore formally unchanged if we allow feedback to the system by means of a controlled flow dynamics governed by a control process $\beta (\cdot )$ to the observation's filtration. Note that the unconditioned dynamics describes the average flow under the action of the D\ae mon, as the flow velocity field is $\overline{v} (x,t)$.

\section{Discussion}
We have extended the analysis of rate of change of the information stored and dissipated by a filter for the class of Markov diffusion models. The various rates of changes, e.g., equations (\ref{eq:lem_id}), (\ref{eq:dot_F}), (\ref{eq:dotD}), or (\ref{eq:rate3}), take the form of an average of a quadratic term involving the co-metric of logarithmic derivatives. This form is due to a type of de Bruin identity relating the derivatives of entropy under Gaussian perturbations to a type of Fisher information. 

We have included feedback as a feature, allowing the filter to act as a Maxwell's d\ae mon. Rather than being an entity with the propensity to change the state of the system in some thermodynamic sense (as in the original formulation where the d\ae mon opens a gate to allow one atom to pass at a time), we consider a controlled flow where the policy applied is determined by the filter. In principle it should then be possible to develop optimal control policies for appropriate performance objectives.

We have restricted our attention to systems where the state space is $\mathbb{R}^n$, however, our results may be naturally extended to diffusions on manifolds under the usual technical assumptions derived by Schwartz and Meyer, and indeed it is in this setting that geometric concepts such as co-metric first appeared, see for instance \cite{Emery}.

It is worth remarking on possible quantum mechanical generalizations. The appropriate setting is the quantum stochastic calculus of Hudson and Parthasarathy \cite{HP} \cite{Par92}, and the quantum filtering theory of Belavkin \cite{Belavkin1}. One of the issues facing quantum mechanics is that it is not possible to give a joint probability distribution to non-commuting observables, and here we find that system observables in the Heisenberg picture do not generally commute with the output processes that we measure. While an input-state-output description is possible, the input noise does not commute with the output observables. Nevertheless, there is a non-demolition property to these models that ensures that observables of the system at any time $t$ commute with the output observables at any time up to $t$ - so quantum filtering and prediction is possible, but not smoothing. Consequently we may give clear meaning to the mutual information shared between the system at a given time $t$ and the observations up to that time. This means that there should exist a version of the Mayer-Wolf Zakai Theorem, but that Duncan's theorem is problematic in the quantum case.
The quantum filtering theory leads to a density matrix valued process $\hat \varrho_t$ which gives the conditioned quantum mechanical state of the system conditioned on the (essentially classical) observations up to that time - so that we work on a hybrid classical-quantum probability space. The mutual information needed is the quantum relative entropy of $\hat \varrho_t$ relative to the classical marginals (the observations distribution) and the quantum marginals (the unconditioned state of the system at time $t$): mathematically this a Holevo information (also known as a Holevo $\chi$
quantity) \cite{Holevo73}. These have been calculated for quadrature (diffusive) and photon-counting (Poissonian) models by Barchielli and Lupieri \cite{BL06,BL06a}. What is of interest here is that there is a natural analogue to the co-metric for quantum dynamical semigroups and this is the dissipation associated to Lindblad generators \cite{Lindblad}. We will develop these ideas in a future publication.

\section{Appendix}

\subsection{Derivation of the Filter}
\label{app:A}
There a several ways to derive the Kushner-Stratonovich equation. One way is the reference probability method to obtain the Zakai form. We describe briefly another, called the characteristic method, which allows us to handle dynamics flows that are controlled.
Here we make the ansatz that 
\begin{eqnarray*}
d\pi_t (f) \equiv \mathcal{G}_{t}dt+\mathcal{H}_{t}^\top dY (t)
\end{eqnarray*}
for some $\mathcal{G}_{t}$ and $\mathcal{H}_{t}$ adapted to $\mathcal{Y}$.

Now let $C (t)$ be any 2nd-order process adapted to $\mathcal{Y}$, then by the projective property of the conditional expectation we must have that the error $\epsilon_t (f) = f(X(t)) - \pi_t (f)$ is orthogonal to the subspace of $\mathcal{Y}_0^t$-measurable variables, and so
\begin{eqnarray}
\mathbb{E}\left[ \epsilon_t (f) \, C (t)\right] =0,
\label{eq:orthog}
\end{eqnarray}

One now takes arbitrary $L^{2}$-functions, $g(t)=[g_1 (t), \cdots , g_p (t)]^\top$, and set 
\begin{eqnarray*}
dC (t)=  C (t)\, g(t)^\top dY (t).
\end{eqnarray*}

We have $d\big( \epsilon_t (f) \,C (t)\big) = d \epsilon_t (f) \,C (t)+ \epsilon_t (f) \,dC (t)+d\epsilon_t (f) \,dC (t) $ by the It\={o} product rule, so (\ref{eq:orthog}) implies $\mathbb{E}\left[ d\left( \epsilon_t (f)C (t)\right) \right] = I+II+III=0$ where
\begin{eqnarray*}
I &=& \mathbb{E} \bigg[  d \epsilon_t (f) \,C (t) \bigg] \\
&=& \mathbb{E} \bigg[ \big[ \big( \mathcal{L}_{\beta (t)} f\big) (X(t))- \mathcal{G}_t\big] C(t)  \bigg]  dt +\mathbb{E} \bigg[  - C(t)
\mathcal{H}_t dY(t) \bigg] \\
&=& \mathbb{E} \bigg[ \big[ \big( \mathcal{L}_{\beta (t)} f\big) (X(t))- \mathcal{G}_t\big] C(t)  \bigg]  dt -\mathbb{E} \bigg[  C(t)
\mathcal{H}_t^\top  h(X(t)) \bigg] dt ;\\
II &=& \mathbb{E} \bigg[   \epsilon_t (f) \, dC (t) \bigg] =
g(t)^\top \mathbb{E} \bigg[   \epsilon_t (f) \, h(X (t)) \, C(t) \bigg] dt ;\\
III &=& \mathbb{E} \bigg[  d \epsilon_t (f) \,dC (t) \bigg] =
\mathbb{E} \bigg[ - \mathcal{H}_t^\top dY(t) \, C (t) g(t)^\top dY(t)\bigg] 
\\
&=& - g(t)^\top \mathbb{E} \bigg[  \mathcal{H}_t C (t) \bigg] dt .
\end{eqnarray*}
As the $g$'s were arbitrary, the identity $I+II+III=0$ can be split up into parts that contain factors of the $g$'s,
\begin{eqnarray*}
g(t)^\top \mathbb{E} \bigg[ \big\{ \epsilon_t (f) h(X(t)) - \mathcal{H}_t \big\} C(t) \bigg] =0,
\end{eqnarray*}
and the rest which do not,
\begin{eqnarray*}
\mathbb{E} \bigg[ \big\{ (\mathcal{L}_{\beta (t)} f ) X(t)) -\mathcal{G}_{t} - \mathcal{H}_t^\top \pi_t (h) \big\} C(t) \bigg] =0.
\end{eqnarray*}
From these we have, respectively,
\begin{eqnarray*}
0 &=&  \mathbb{E} \big[   \epsilon_t (f) h(X(t)) - \mathcal{H}_t | \mathcal{Y}_0^t \big]  
= \pi_t(fh^\top ) - \pi_t (f) \pi_t (h ^\top ) -\mathcal{H}_t,
\end{eqnarray*}
and
\begin{eqnarray*}
0= \pi _t (\mathcal{L}_{\beta (t)} f )  -\mathcal{G}_{t} - \mathcal{H}_t^\top \pi_t (h)   .
\end{eqnarray*}

Therefore, $\mathcal{H}_t = \pi_t(fh^\top ) - \pi_t (f) \pi_t (h ^\top ) $, and 
$\mathcal{G}_{t} =\pi _t (\mathcal{L}_{\beta (t)}f )   - \mathcal{H}_t^\top \pi_t (h) $.
From this we deduce the form of the Kushner-Stratonovich filter.
The derivation here is basically the same as the standard uncontrolled case, though we note the equivalence (\ref{eq:pigtails}). The corresponding Zakai equation will be
\begin{eqnarray}
d\zeta _{t}(x, \omega ) =\mathcal{L}^\star_\beta \zeta _{t} (x, \omega ) \vert_{\beta = \beta (t, \omega )} \,dt+
\zeta_{t}(x, \omega ) \, h(x)^\top dY (t, \omega ).  \label{eq:Zakai_zo}
\end{eqnarray}

\subsection{Proof of the Mayer-Wolf Zakai Theorem with Feedback}
\label{app:B}
We derive the Mayer-Wolf Zakai with the additional feature of feedback to the dynamics as in (\ref{eq:control_dynamics}) with
$\beta (\cdot )$ assumed $\mathcal{Y}$-adapted.

From the SDE for $\zeta _{t}\left( x\right) $, equation (\ref{eq:Zakai_zo}), we
have
\begin{eqnarray*}
d\ln \zeta _{t}\left( x\right)  &=&\frac{1}{\zeta _{t}\left( x\right) }%
d\zeta _{t}\left( x\right) -\frac{1}{2}\frac{1}{\zeta _{t}\left( x\right)
^{2}}d\zeta _{t}\left( x\right) \,d\zeta _{t}\left( x\right)  \\
&=&\left[ \frac{1}{\zeta _{t}\left( x\right) }\left( \mathcal{L}_{\beta (t)}^{\star
}\zeta _{t}\right) \left( x\right) -\frac{1}{2}\left| h\left( x\right)
\right| ^{2}\right] dt+h\left( x\right) ^{\top }dY\left( t\right) .
\end{eqnarray*}
Noting that $X_{0}^{t}$ and $Y_{0}^{t}$ have zero cross-correlation, i.e., $%
dX\left( t\right) dY\left( t\right) ^{\top }\equiv 0$, we find
\begin{eqnarray*}
d\ln \zeta _{t}\left( X\left( t\right) \right)  &=&\left[ \frac{1}{\zeta _{t}%
}\left( \mathcal{L}_{\beta (t)}^{\star }\zeta _{t}\right) -\frac{1}{2}\left| h\right|
^{2}\right] _{x=X\left( t\right) }dt+h\left( X\left( t\right) \right) ^{\top
}dY\left( t\right)  \\
&&+\left. \left( \nabla \ln \zeta _{t}\left( x\right) \right) ^{\top
}\right| _{x=X\left( t\right) }dX\left( t\right)  \\
&&+\left. \frac{1}{2}\Sigma ^{ij}\left( x\right) \partial _{i}\partial
_{j}\ln \zeta _{t}\left( x\right) \right| _{x=X\left( t\right) }dt \end{eqnarray*}
\begin{eqnarray*}
&=&\bigg[ \frac{1}{\zeta _{t}}\left( \mathcal{L}_{\beta (t)}^{\star }\zeta _{t}\right) -\frac{1%
}{2}\left| h\right| ^{2} \\
&&+\frac{1}{2}\Sigma ^{ij}\left( x\right) \left( \frac{%
1}{\zeta _{t}\left( x\right) }\zeta _{t,ij}\left( x\right) -\frac{1}{\zeta
_{t}\left( x\right) ^{2}}\zeta _{t,i}\left( x\right) \zeta _{t,j}\left(
x\right) \right) \bigg]_{x=X\left( t\right) }dt \\
&&+h\left( X\left( t\right) \right) ^{\top }dY\left( t\right) +\frac{1}{%
\zeta _{t}\left( x\right) }\left. \left( \nabla \zeta _{t}\left( x\right)
\right) ^{\top }\right| _{x=X\left( t\right) }dX\left( t\right) .
\end{eqnarray*}
Taking the expectation leads to
\begin{eqnarray*}
\frac{d}{dt}\mathbb{E}\left[ \ln \zeta _{t}\left( X\left( t\right) \right) %
\right]  &=&\mathbb{E} \bigg[ \bigg( \frac{1}{\zeta _{t}\left( x\right) }\left( \mathcal{L}_{\beta (t)}^{\star }\zeta _{t}\right) \left( x\right) -\frac{1}{2}\left| h\left(
x\right) \right| ^{2} \\
&&+\frac{1}{2}\Sigma ^{ij}\left( x\right) \left( \frac{1}{\zeta _{t}\left(
x\right) }\zeta _{t,ij}\left( x\right) -\frac{1}{\zeta _{t}\left( x\right)
^{2}}\zeta _{t,i}\left( x\right) \zeta _{t,j}\left( x\right) \right)  \\
&&\left. +\left| h\left( x\right) \right| ^{2}+\frac{1}{\zeta _{t}}\left(
\nabla \zeta _{t}\left( x\right) \right) ^{\top }v(x)\bigg)\right| _{x=X\left(
t\right) } \bigg].
\end{eqnarray*}
Using the explicit form of the dual generator, we have
\begin{eqnarray*}
\frac{d}{dt}
\mathbb{E}\left[ \ln \zeta _{t}\left( X\left( t\right) \right) \right]  &=&%
\mathbb{E} \bigg[ \bigg( \frac{1}{\zeta _{t}\left( x\right) }\Sigma ^{ij}\left( x\right)
\zeta _{t,ij}\left( x\right) +\frac{1}{\zeta _{t}\left( x\right) }\Sigma
_{\; \, ,j}^{ij}\left( x\right) \zeta _{t,i}\left( x\right) \\
&&+\frac{1}{2}\Sigma
^{ij}_{\; \, ,ij} \left( x\right) 
-1\frac{1}{2}\Sigma^{ij}\frac{1}{\zeta_t^2}\zeta_{t,i}\zeta_{t,j} \\
&&\left.-\left( \nabla .v\right) \left( x, \beta (t) \right) +\frac{1}{2}\left| h\left(
x\right) \right| ^{2} \bigg) \right|_{x=X\left( t\right) } \bigg]
\end{eqnarray*}

A similar calculation leads to
\begin{eqnarray*}
\frac{d}{dt}\mathbb{E}\left[ \ln \rho_{t}\left( X\left( t\right)
\right) \right]  &=&\mathbb{E} \bigg[ \bigg( \frac{1}{\rho_{t}\left( x\right) }%
\Sigma ^{ij}\left( x\right) \rho_{t,ij}\left( x\right) +\frac{1}{\rho_{t}\left( x\right) }\Sigma _{\; \, ,j}^{ij}\left( x\right) 
\rho_{t,i}\left( x\right) \\
&&+\frac{1}{2}\Sigma ^{ij}_{\, \;,ij} \left( x\right) 
-\frac{1}{2} \Sigma^{ij}\frac{1}{\rho_t^2}\rho_{t,i}\rho_{t,j}\\
&&\left.-\left( \nabla .v\right) \left( x, \beta (t) \right) \bigg) \right|_{x=X\left( t\right) } \bigg]
\end{eqnarray*}
We therefore have
\begin{gather*}
\frac{d}{dt} \mathbb{E}\left[ \ln \frac{\zeta _{t}\left( X\left( t\right) \right) }{\rho_{t}\left( X\left( t\right) \right) }\right]  =\frac{1}{2}\mathbb{E}%
\left[ \left| h\left( X\left( t\right) \right) \right| ^{2}\right]  \\
-\frac{1}{2}\mathbb{E}\left[ \left. \Sigma ^{ij}\left( x\right) \left( 
\frac{1}{\zeta _{t}\left( x\right) ^{2}}\zeta _{t,i}\left( x\right) \zeta
_{t,j}\left( x\right) -\frac{1}{\rho_{t}\left( x\right) ^{2}}\rho_{t,i}\left( x\right) \rho_{t,j}\left( x\right) \right) \right|
_{x=X\left( t\right) }\right]  \\
+\mathbb{E}\left[ \left. \Sigma ^{ij}\left( x\right) \left( \frac{1}{\zeta
_{t}\left( x\right) }\zeta _{t,ij}\left( x\right) -\frac{1}{\rho_{t}\left( x\right) }\rho_{t,ij}\left( x\right) \right) \right|
_{x=X\left( t\right) }\right]  \\
+\mathbb{E}\left[ \left. \Sigma _{\;,i}^{ij}\left( x\right) \left( \frac{1%
}{\zeta _{t}\left( x\right) }\zeta _{t,j}\left( x\right) -\frac{1}{\rho_ {t}\left( x\right) }\rho_{t,j}\left( x\right) \right) \right|
_{x=X\left( t\right) }\right] .
\end{gather*}
The last two expectations however vanish identically. For instance,
\begin{eqnarray*}
\mathbb{E}\left[ \left. \Sigma ^{ij}\left( x\right) \left( \frac{1}{\zeta
_{t}\left( x\right) }\zeta _{t,ij}\left( x\right) \right) \right|
_{x=X\left( t\right) }\right]  &=&\mathbb{E}\left[ \left. \Sigma ^{ij}\left(
x\right) \left( \frac{1}{\hat{\rho} _{t}\left( x\right) }\hat{\rho} _{t,ij}\left(
x\right) \right) \right| _{x=X\left( t\right) }\right]  \\
&=&\int \Sigma ^{ij}\left( x\right) \mathbb{E}\left[ \hat{\rho} _{t,ij}\left(
x\right) \right] \;dx \\
&=&\int \Sigma ^{ij}\left( x\right) \rho_{t,ij}\left( x\right) \;dx \\
&\equiv &\mathbb{E}\left[ \left. \Sigma ^{ij}\left( x\right) \frac{1}{\rho_{t}\left( x\right) }\rho_{t,ij}\left( x\right) \right| _{x=X\left(
t\right) }\right] .
\end{eqnarray*}
The other identity,
\begin{eqnarray*}
 \mathbb{E}\left[ \left. \Sigma _{\;,i}^{ij}\left( x\right) \frac{1}{%
\zeta _{t}\left( x\right) }\zeta _{t,j}\left( x\right) \right| _{x=X\left(
t\right) }\right] \equiv \mathbb{E}\left[ \left. \Sigma _{\;,i}^{ij}\left(
x\right) \frac{1}{\rho_{t}\left( x\right) } \rho_{t,j}\left(
x\right) \right| _{x=X\left( t\right) }\right] ,
\end{eqnarray*}
being similarly established.
Using (\ref{eq:MWZ2}) we then have
\begin{eqnarray*}
\mathsf{I}\left( X\left( t\right) ;Y_{0}^{t}\right) =\frac{1}{2}\mathbb{E}%
\left[ \left| h\left( X\left( t\right) \right) \right| ^{2}\right] -\frac{1}{%
2}\mathbb{E}\left[  | \pi _{t}\left( h\right) | ^{2}\right] -\frac{1}{2}%
\left( \mathsf{J}_{t}^{\pi }-\mathsf{J}_{t}^{\rho}\right) .
\end{eqnarray*}
However, $\mathbb{E}\left[ \left| h\left( X\left( t\right) \right) \right|
^{2}\right] - \mathbb{E}\left[  | \pi _{t}\left( h\right) | ^{2}%
\right] =\mathbb{E}\left[  | h\left( X\left( t\right) \right) -\pi _{t}\left( h\right) | ^{2}\right] $, giving the desired result.
$\square$

\end{document}